\documentclass[showpacs,showkeys,preprintnumbers,pre,amsmath,amssymb,superscriptaddress,twocolumn]{revtex4-2}
\usepackage[english]{babel}
\usepackage[utf8]{inputenc}
\usepackage{amsthm} %% For proof environment
\usepackage{color}
\usepackage{dcolumn}
\usepackage{tikz-cd}

\usepackage[colorlinks=true,linkcolor=blue,urlcolor=blue,citecolor=blue]{hyperref}
\usepackage{bm}

\usepackage{algorithm,algorithmicx,algpseudocode}

\newtheorem{theorem}{Theorem}[section]
\newtheorem{corollary}[theorem]{Corollary}
\newtheorem{lemma}[theorem]{Lemma}

\begin{document}
%\title{Lattice Pressure Tensor from Momentum Balance in Single-range Potentials %\MS{Lattice Momentum Balance in the lattice Boltzmann interaction-framework ?!?}
%\CC{General derivation of the lattice pressure tensor: a lattice momentum balance viewpoint}\MS{In the end we stick to 1D pressure tensor, hence "general derivation" could be misleading ?!?}
%}

\title[Honeycomb evaluations]{Exact Evaluation of Hexagonal Spin-networks and Topological Quantum Neural Networks}

\author{Matteo Lulli}
\address{Department of Mechanics and Aerospace Engineering,
	Southern University of Science and Technology, Shenzhen, Guangdong 518055, China}
\email{lulli@sustech.edu.cn}

\author{Antonino Marciano} 
\address{Department of Physics, Fudan University, 200433 Shanghai, China}
\address{Laboratori Nazionali di Frascati INFN, Frascati (Rome), Italy, EU}
\address{INFN sezione Roma Tor Vergata, I-00133 Rome, Italy, EU}
\email{marciano@fudan.edu.cn}

\author{Emanuele Zappala} 
\address{Department of Mathematics and Statistics, Idaho State University\\
Physical Science Complex |  921 S. 8th Ave., Stop 8085 | Pocatello, ID 83209, USA} 
\email{emanuelezappala@isu.edu}

\begin{abstract}
The physical scalar product between spin-networks has been shown to be a fundamental tool in the theory of topological quantum neural networks (TQNN), which are quantum neural networks previously introduced by the authors in the context of quantum machine learning. However, the effective evaluation of the scalar product remains a bottleneck for the applicability of the theory.
We introduce an algorithm for the evaluation of the physical scalar product defined by Noui and Perez between spin-network with hexagonal shape. By means of recoupling theory and the properties of the Haar integration we obtain an efficient algorithm, and provide several proofs regarding the main steps. We investigate the behavior of the TQNN evaluations on certain classes of spin-networks with the classical and quantum recoupling. All results can be independently reproduced through the ``idea.deploy" framework~\href{https://github.com/lullimat/idea.deploy}{\nolinkurl{https://github.com/lullimat/idea.deploy}} 
\end{abstract}

\maketitle

\section{Introduction}\label{sec:intro} 
%paragrafo introduttivo sull'importanza
The high computational demand from every sector of contemporary science, including particle physics and condensed matter, has propelled the investment in new approaches. These have arguably become the holy grail of scientific computation, e.g. quantum computing. In turn, quantum computational approaches leave the unanswered question of how to process the data in quantum machines such as quantum computers.

Important recent developments in deriving novel and efficient algorithms in quantum machine learning have been rooted in the theoretical foundation of either quantum  mechanics \cite{biamonte2017quantum,farhi2018classification,Beer_2020} or its extension to continuous systems,  quantum field theory \cite{TQNN,Halverson_2021,grosvenor2022edge,Fields:2022bti,marciano2022deep}. These attempts constitute the answer to the need of quantum algorithms for quantum computing, and the reason to propose quantum neural networks (QNN) --- see e.g. \cite{Beer_2020} --- and their extensions in the continuum \cite{TQNN,Fields:2022bti,marciano2022deep}.

A prototype for Universal Quantum Computation is provided by the Reshetikhin-Turaev model \cite{reshetikhin1991invariants}, as proved by Freedman-Kitaev-Wang \cite{key1911253m,
key1910832m}. More recently, topological quantum neural networks (TQNN), based on the TQFTs such as the Turaev-Viro model \cite{turaev1992state} and its physically motivated generalizations, have been proposed as a candidate to provide quantum algorithms in quantum computing. The advantage of TQNNs lies in the fact that they share a common ground with material science, and in particular with the string-net models of Levin-Wen \cite{2003PhRvB..67x5316L, Levin:2004mi}.

This thread of thoughts motivates us in believing that a successful translation of  the approach by Freedman-Kitaev-Wang in our TQFT methods, which is known to be possible at the mathematical level (\cite{roberts1995skein,benedetti1996roberts,lickorish1991three,lickorish1993skeins,lickorish1992calculations}) and are at the base of the TQNNs introduced in \cite{TQNN,Fields:2022bti,marciano2022deep}, will result in a Universal Quantum Computing that is implementable in practice in material science.

This is achieved through the equivalent language of string-nets \cite{2003PhRvB..67x5316L, Levin:2004mi}, providing an alternative to topological quantum computing with anyons. The tight mathematical connection relating Reshetikhin-Turaev model and Turaev-Viro model \cite{roberts1995skein,benedetti1996roberts,lickorish1991three,lickorish1993skeins,lickorish1992calculations} (one is known to be the ``square root'' of the other) allows to use our methods based on the latter to recast the former in terms of string-nets, for a material-science concrete implementation through equivalence between spin-nets and Turaev-Viro model \cite{kirillov2011string,runkel2020string,kadar2010microscopic,koenig2010quantum}, rather than their traditional anyonic-based language.

TQNNs are represented as spin-network states supported on graphs \cite{TQNN,Fields:2022bti,marciano2022deep}. These are one-complexes defined as the dual simplicial complexes to the boundaries of a manifold.  Spin-networks represent then boundary states (input/output data). The intrinsic quantumness of TQNNs stands in the fact that the dynamical evolution or these boundary states is attained through the sum over an infinite amount of intermediate virtual states (filters/hidden layers). This is the key element to the derivation of novel (quantum) algorithms. The latter are in principle characterized by higher accuracy and less computational time than traditional deep neural networks (DNN) ones, thus more adapt to machine implementations.

Within this framework, it becomes then urgent to obtain the exact evaluation of spin-networks. This is a problem that requires, in principle, an exponential time complexity. In fact, the recoupling theory defined by Kauffman and Lins \cite{KL} defines a partition function from spin-networks by summing over all possible combinations of admissible colorings, and is based on the (factorial) unraveling of the Jones-Wenzl projector \cite{KL,Lick}. 

Recoupling theory was originally introduced to define topological invariants of $3$-manifolds. In fact, one could show that the aforementioned partition function defined on spin-networks dual to the cells of a (regular enough) simplicial decomposition of a $3$-manifold is invariant under Matveev-Piergallini moves \cite{piergallini1986standard,matveev1988transformations}, ensuring that the numerical value of the partition function is unchanged when considering homeomorphic topological spaces. 

The theory has become widely applied in quantum gravity, where it has played a central role in the formulation by Perez and Noui \cite{NP} of the physical inner product for Euclidean quantum gravity in 3-dimensions, achieved via the regularization of the projector that imposes the curvature constraint of $SU(2)$ symmetric $BF$ theory at the quantum level. More recently, the implementation of a projector similar to the one studied by Perez and Noui, applied to a still topological extended $BF$ theory provided with cosmological constant, has been derived in \cite{CosmK}. There, it has been shown that the imposition of the curvature constraint with cosmological constant naturally provides the recoupling theory of a quantum group to emerge from the initial $SU(2)$ symmetry structure. This has finally allowed to introduce recoupling theory of quantum groups in 3-dimensional quantum gravity in a constructive way, explaining the emergence of the recoupling theory of $SU_q(2)$ from that one of $SU(2)$.    

The recoupling theories of $SU(2)$ and $SU_q(2)$ are crucial for the applications into quantum machine learning that were explored in \cite{TQNN,Fields:2022bti,marciano2022deep}. As we anticipated, the notion of TQNNs is formulated by means of a TQFT, and is in practice evaluated via recoupling. Although in \cite{TQNN,Fields:2022bti,marciano2022deep} concrete examples were provided only accounting for the recoupling theory of $SU(2)$, a natural extension to quantum groups, and in particular to the recoupling theory of $SU_q(2)$, can be envisaged following the constructive arguments deployed in \cite{CosmK}.  

Nonetheless, the main bottleneck of the concrete applicability of the results in \cite{TQNN,Fields:2022bti,marciano2022deep} remains the ability of evaluating the Perez-Noui projector in an efficient manner. As a subcase, this also includes the problem of evaluating spin-networks in general form, which is a notoriously complicated problem and it has previously been considered in the seminal articles \cite{SN_evaluation,BL}, where theoretical and computational results regarding certain specific cases have been considered in detail. We focus in this article on the evaluation of spin-networks of hexagonal shape and arbitrary size, and relate these objects to the pixel space of images to apply TQNNs. We use these results to obtain an algorithm for the evaluation of the Perez-Noui projector on $SU(2)$ \cite{NP}, and its generalization to $SU_q(2)$ \cite{CosmK}.

The plan of the paper is the following. In Sec.~\ref{pitoex} we delve into the correspondence between the pixel space of images and the hexagonal spin-networks. In Sec.~\ref{sec:Honey} consider spin-networks that are obtained by juxtaposition of hexagonal cells. In Sec.~\ref{algo} we provide the algorithm for the evaluation of the spin-network. In Sec.~\ref{sec:comput} we compute the transition amplitudes between two different hexagonal spin-networks. In Sec.~\ref{phase-space} we show some numerical results for the transition probability between two different hexagonal spin-networks. In Sec.~\ref{Isi} we comment on the relation with the Ising model. Finally, in Sec.~\ref{conclu} we provide outlooks for future investigations and preliminary  conclusions.

\section{From pixel space to hexagonal spin-networks} \label{pitoex}
\begin{figure*}[ht!]
%%\begin{center}
\includegraphics[width=.55\linewidth]{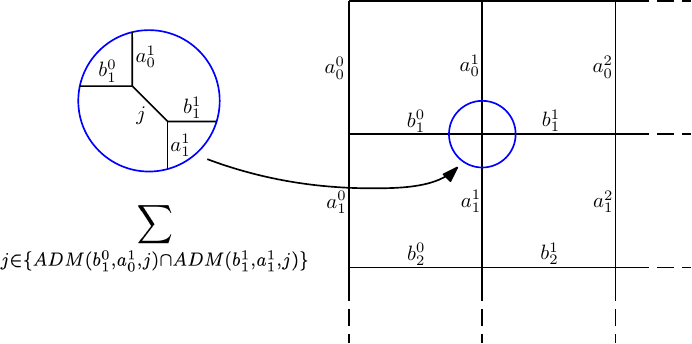}
\caption{Example of spin-network with grid support where desingularization is performed at each $4$-valent vertex. Here, the spin colors $j$ shown in the zoomed in part of the figure, run over all the compatible colors with respect to the incoming edges.}
\label{fig:desing_grid}
%%\end{center}    
\end{figure*}

Our starting point is a correspondence between the pixel space of images, and hexagonal spin-networks. This also motivates our interest in evaluating hexagonal spin-networks, as they are seen to correspond to images, therefore constituting our key to translate data sets into the input of TQNNs.  

We start our discussion by considering first a very natural approach that rapidly incurs into an unwanted computational overhead. We consider an $n\times n$ grid where each square indicates a pixel. Each pixel is endowed with a label between $0$ and $m$ indicating the intensity of the black color. It is clear that in this way we can represent a black and white image of $n\times n$ resolution. To such an image, we can associate a spin-network proceeding as follows. Let $P_k$ denote the $k^{\rm th}$ pixel of the grid in the lexicographical order. We introduce the barycenter coordinate of each pixel (square in the grid), and consider the von Neumann neighborhood $\mathcal N_k$ of $P_k$, which is given by $\mathcal N_k = \{P_{k-1}, P_{k+1}, P_{k-n}, P_{k+n}\}$ with the assumption that one or two of the pixels in $\mathcal N_k$ is omitted for pixels $P_k$ along the edges or the corners, respectively. We observe that we do not use periodic boundaries here, so that our resulting spin-networks do not lie in the torus, but in the plane. The centers of $P_k$, which we denote by $C_k$, will be the vertices of the spin-networks, and each $C_k$ is connected to all the vertices corresponding to pixels belonging to its von Neumann neighborhood. The colors of the spin-networks are attributed by labeling the edges between the vertices based on the difference of the pixel values at the vertices $C_k$ and $C_l$ that they connect. This approach was followed for instance in \cite{TQNN}. 

However, while working in the semi-classical limit does not incur in any problems (see e.g. \cite{TQNN}), when we try to evaluate the spin-networks obtained through this procedure we find that each vertex needs to be desingularized as shown in Figure~\ref{fig:desing_grid}, in order to obtain two trivalent vertices from each $4$-valent vertex. Each desingularization will introduce a summation over the admissible colors, and this negatively affects the computational cost of a TQNN algorithm based on spin-networks with such grid supports. 

% Instead, we proceed by cutting the squares delimited by such a grid in half, along a diagonal chosen consistently throughout the whole grid. We mark the center of each half-square with a vertex, and connect the vertices by segments that cross the edges of the halved squares, see Figure~\ref{fig:hexagonal_grid}. This procedure produces hexagonal grids, as in Figure~\ref{fig:honey}, that allow to pass form pixel space to trivalent graphs, that will be the geometric supports of the spin-networks considered in this article. 

% \begin{center}
%     \begin{figure}[h!]
%         \centering
%         \begin{tikzpicture}
%             \draw (-2,-2) -- (-2,2) -- (2,2) -- (2,-2) -- (-2,-2);
%             \draw (-2,-2) -- (2,2);
%             \draw (-2,0) -- (2,0);
%             \draw (0,-2) -- (0,2);
%             \draw (-2,0) -- (0,2);
%             \draw (0,-2) -- (2,0);

%             \node (a)  at (-1.5,1.5) {$\bullet$};
%             \node (a)  at (-.5,.5) {$\bullet$};
%             \node (a)  at (.5,-.5) {$\bullet$};
%             \node (a)  at (1.5,-1.5) {$\bullet$};
%             \node (a)  at (1.5,.5) {$\bullet$};
%             \node (a)  at (.5,1.5) {$\bullet$};
%             \node (a)  at (-.5,-1.5) {$\bullet$};
%             \node (a)  at (-1.5,-.5) {$\bullet$};

%             \draw[thick] (-1.5,1.5) -- (-.5,.5);
%             \draw[thick] (-.5,.5) -- (.5,1.5) -- (1.5,.5) -- (.5,-.5) -- (-.5,-1.5) -- (-1.5,-.5) -- (-.5,.5);
%             \draw[thick] (1.5,-1.5) -- (.5,-.5);
%         \end{tikzpicture}   
%     \end{figure}
% \end{center}

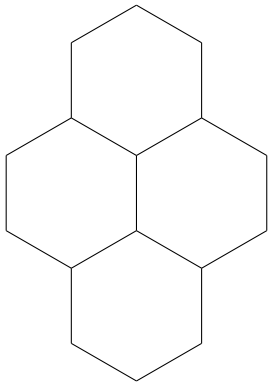
\begin{figure}[h!]
\begin{center}

\begin{tikzpicture}[rotate=30]
\foreach \a in {0,120,-120} \draw (3*1,2*sin{60}*1) -- +(\a:1);
\foreach \a in {0,120,-120} \draw (3*1,2*sin{60}*2) -- +(\a:1);
\foreach \a in {0,120,-120} \draw (3*0+3*cos{60},2*sin{60}*1+sin{60}) -- +(\a:1);
\foreach \a in {-120,120} \draw (3*1+3*cos{60},2*sin{60}*1+sin{60}) -- +(\a:1);
\foreach \a in {0,-120} \draw (3*0+3*cos{60},2*sin{60}*2+sin{60}) -- +(\a:1);
\foreach \a in {-120} \draw (3*1+3*cos{60},2*sin{60}*2+sin{60}) -- +(\a:1); 
\foreach \a in {120} \draw (3*1+3*cos{60},2*sin{60}*2+sin{60}) -- +(\a:1);  
\foreach \a in {0} \draw (3*1,2*sin{60}*3) -- +(\a:1);
\foreach \a in {-120} \draw (3*1,2*sin{60}*3) -- +(\a:1);
\foreach \a in {0} \draw (3*0+3*cos{60},2*sin{60}*0+sin{60}) -- +(\a:1);    
\foreach \a in {120} \draw (3*0+3*cos{60},2*sin{60}*0+sin{60}) -- +(\a:1);  
\end{tikzpicture}
\end{center}

\caption{Honeycomb lattice with size two side, where corners and vertices are intertwiners.}
\label{fig:honey}
\end{figure}

Instead, we proceed by considering a honeycomb lattice structure as in Figure~\ref{fig:honey}. It is clear that one can find a one-to-one correspondence between hexagons in the lattice in the figure and a $2\times 2$ (pixel) image. For the $n\times n$ pixel space one proceeds analogously. This process allows us to associate to a figure with $n\times n$ pixel resolution a hexagonal lattice which we will call $n\times n$ as well.  
Using a scheme similar to the one described above, we can associate to each pixel in black and white or RGB colors a numerical value between $0$ and some upper bound $N$ depending on the coloring scale. Each perimeter of the hexagon is then given the ``color'' $r\in [0,N]$ determined by the pixel color. On edges that are shared among hexagons, the colors will be summed. So, if the edge $e$ is shared between hexagon $h_i$ and $h_j$ with respective colors $r_i$ and $r_j$, we have that $e$ takes the color $r_i+r_j$. At each edge we now associate two projectors (which is the same one as by definition of projector) with the implicit assumption that each edge is labeled by a number of strands that derived by summing pixel colors. Using the definition of spin-network as in \cite{KL}, we can rewrite the whole hexagon lattice as a spin-network as in Figure~\ref{fig:pixel_hex}, where the $2\times 2$ case is depicted.

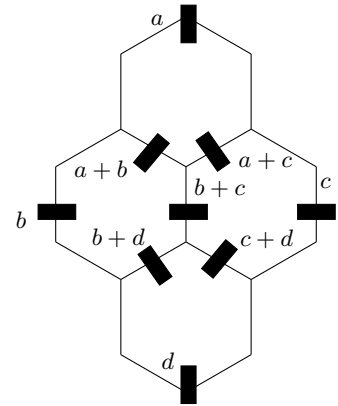
\begin{figure}[h!]
%%\begin{center}
\begin{tikzpicture}[rotate=30]
\foreach \a in {0,120,-120} \draw (3*1,2*sin{60}*1) -- +(\a:1);
\foreach \a in {0,120,-120} \draw (3*1,2*sin{60}*2) -- +(\a:1);
\foreach \a in {0,120,-120} \draw (3*0+3*cos{60},2*sin{60}*1+sin{60}) -- +(\a:1);
\foreach \a in {-120,120} \draw (3*1+3*cos{60},2*sin{60}*1+sin{60}) -- +(\a:1);
\foreach \a in {0,-120} \draw (3*0+3*cos{60},2*sin{60}*2+sin{60}) -- +(\a:1);
\foreach \a in {-120} \draw (3*1+3*cos{60},2*sin{60}*2+sin{60}) -- +(\a:1); 
\foreach \a in {120} \draw (3*1+3*cos{60},2*sin{60}*2+sin{60}) -- +(\a:1);  
\foreach \a in {0} \draw (3*1,2*sin{60}*3) -- +(\a:1);
\foreach \a in {-120} \draw (3*1,2*sin{60}*3) -- +(\a:1);
\foreach \a in {0} \draw (3*0+3*cos{60},2*sin{60}*0+sin{60}) -- +(\a:1);    
\foreach \a in {120} \draw (3*0+3*cos{60},2*sin{60}*0+sin{60}) -- +(\a:1);  

% \node (a) at (3.55,4.35) {$a$};  
% \node (a) at (1.85,3.5) {$b$};
% \node (a) at (3.55,2.5) {$c$};
% \node (a) at (2.05,1.65) {$d$};

\draw[fill=black,rotate=185] (-2.1,-2.1) rectangle (-2.3,-2.6);
\draw[fill=black,rotate=60] (3.8,-0.65) rectangle (4,-1.15);
\draw[fill=black,rotate=110] (1,-3.1) rectangle (1.2,-3.6);
\draw[fill=black,rotate=110] (2.6,-3.6) rectangle (2.8,-4.1);
\draw[fill=black,rotate=185] (-3.6,-2.9) rectangle (-3.8,-3.4);
\draw[fill=black,rotate=60] (3.8,1.1) rectangle (4,0.6);
\draw[fill=black,rotate=60] (3.8,-2.35) rectangle (4,-2.85);
\draw[fill=black,rotate=150] (-1,-1.35) rectangle (-0.8,-1.85);
\draw[fill=black,rotate=150] (-1,-6.15) rectangle (-0.8,-6.65);

\node (a) at (3.65,5.35) {$a$}; 
\node (a) at (2.,4.) {$a+b$}; 
\node (a) at (3.95,3.) {$a+c$}; 
\node (a) at (3.25,3.) {$b+c$};
\node (a) at (4.5,2.35) {$c$};
\node (a) at (.75,3.95) {$b$};
\node (a) at (1.75,3.1) {$b+d$};
\node (a) at (3.45,2.1) {$c+d$};
\node (a) at (1.5,1.35) {$d$};
\end{tikzpicture}
\caption{Honeycomb lattice with size two side corresponding to a $2\times 2$ pixel figure: the values $a, b, c, d$ at the center of the hexagons represent the colors corresponding to the pixel color, and the projectors are labeled by the number of strands obtained by summing the pixel colors}
\label{fig:pixel_hex}
%% \end{center}
\end{figure}

\section{Honeycomb spin-networks and their evaluation}\label{sec:Honey}
    
We consider spin-networks that are obtained by juxtaposition of hexagonal cells, where each vertex is trivalent, as depicted in Figure~\ref{fig:honey}, where a four cell honeycomb is shown. In other words, we consider a honeycomb lattice whose vertices are intertwiners, and whose edges are bundles (i.e. tensor products) of $\frak{su}_2(\mathbb C)$ fundamental representations symmetrized by the {\it Jones-Wenzl idempotent}, which we will also call {\it symmetrizer}. We denote by the symbol $\mathcal H_n(\bar a,\bar b,\bar c, \bar d, \bar e)$ the square honeycomb lattice whose side is of size $n$ and whose edges are labelled by spin-colors $\bar a,\bar b,\bar c, \bar d, \bar e$, following a precise scheme that will be described later in the article. Here $\bar a$ etc., indicate vectors of spin colors associated to the edges of the spin-networks. When the spin colors do not play a role in the discussion, or if there is no risk of confusion, we will omit to write the labels and will content ourselves with simply writing $\mathcal H_n$. In Figure~\ref{fig:honey}, for example, a square honeycomb lattice of side $n = 2$ is represented. The labels are not assumed to constitute admissible triples a priori, and we set to zero the evaluation of a honeycomb spin-network whose labels contain a non-admissible triple at some vertex. In this article we allow, albeit rather improperly, spin-networks with open ends, i.e. supported on graphs that have edges with one endpoint not connected to any vertex. Considering these types of spin-networks simplifies certain inductive procedures in the constructions, as we shall see in the next results. They will be referred to as {\it open-end} or {\it open-edge} spin-networks, in the rest of this article. 

Along with the spin-networks $\mathcal H_n$, we also define the open-end spin-networks $\mathcal O_n$ as follows. For each $n$, $\mathcal O_n$ is defined as a single hexagonal cell, where we attach three open spin-network edges, symmetric with respect to the hexagonal cell. The central edge is a single edge, while the two lateral edges are assumed to consist of $2n-1$ connected edges according to the geometry depicted in Figure~\ref{fig:lateral}, where there are $n-1$ vertical edges and $n$ horizontal ones.

\begin{figure}[t!]
\begin{center}
\begin{tikzpicture}
\draw (0,0) -- (2,0);
\draw[dashed] (2,0) -- (4,0);
\draw (4,0) -- (6,0);
\draw (1,0) -- (1,-1);
\draw (5,0) -- (5,-1);
\draw[fill=black] (1,0) circle (2pt);
\draw[fill=black] (5,0) circle (2pt);
\end{tikzpicture}
\caption{Lateral open-end spin-networks of $\mathcal O_n$}
\label{fig:lateral}
\end{center}
\end{figure}
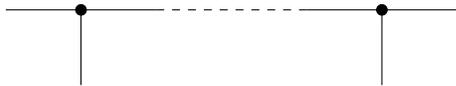

The open-end spin-network $\mathcal O_n$ is depicted in Figure~\ref{fig:octopus}.
\begin{figure*}[ht!]
\begin{center}
    \begin{tikzpicture}[rotate=30]
    \foreach \a in {0,120,-120} \draw (3*1,2*sin{60}*1) -- +(\a:1);
    \foreach \a in {0,-120} \draw (3*1,2*sin{60}*2) -- +(\a:1);
\foreach \a in {120,-120} \draw (3*1+3*cos{60},2*sin{60}*1+sin{60}) -- +(\a:1);

\draw[rotate=-30] (2.6,3.5) -- (4.6,3.5);
\draw[dashed,rotate=-30] (4.6,3.5) -- (6.6,3.5);
\draw[rotate=-30]  (6.6,3.5) -- (8.6,3.5);
\draw[rotate=-30]  (3.6,3.5) -- (3.6,2.5);
\draw[rotate=-30]  (7.6,3.5) -- (7.6,2.5);
\draw[fill=black,rotate=-30] (3.6,3.5) circle (2pt);
\draw[fill=black,rotate=-30] (7.6,3.5) circle (2pt);

\draw[rotate=-30] (2.6-7.7,3.5) -- (4.6-7.7,3.5);
\draw[dashed,rotate=-30] (4.6-7.7,3.5) -- (6.6-7.7,3.5);
\draw[rotate=-30]  (6.6-7.7,3.5) -- (8.6-7.7,3.5);
\draw[rotate=-30]  (3.6-7.7,3.5) -- (3.6-7.7,2.5);
\draw[rotate=-30]  (7.6-7.7,3.5) -- (7.6-7.7,2.5);
\draw[fill=black,rotate=-30] (3.6-7.7,3.5) circle (2pt);
\draw[fill=black,rotate=-30] (7.6-7.7,3.5) circle (2pt);

\draw[fill=black][rotate=-30] (2.6,3.5) circle (2pt);
\draw[fill=black][rotate=-30] (0.9,3.5) circle (2pt);
    \end{tikzpicture}
    \caption{Open-edge spin-network $\mathcal O_n$}
    \label{fig:octopus}
\end{center}
\end{figure*}
%% \end{minipage}
%%\twocolumngrid

Let $\mathcal N$ denote a spin-network, and let $\mathcal L$ denote an open-end spin-network, with legs labeled $a_1, \cdots, a_r$, for some $r\in \mathbb N$. Let $\bar v = (v_1, \ldots, v_r)$ denote a list of vertices of $\mathcal N$. Then, we can define the composition, written $\mathcal N \circ_{\bar v} \mathcal L$, where each edge $a_i$ of $\mathcal L$ is joined with the vertex $v_i$ of $\mathcal N$. If the edges are colored by spin colors, then we set to zero the composition of networks where the colors are not admissible, while we denote the admissible composition by the same symbol as above. 
Then we have the following result. It holds that 
\begin{eqnarray}
     \mathcal H_{n+1} = \mathcal H_n \circ_{\bar v} \mathcal O_n,
\end{eqnarray}
for every  $n\in \mathbb N$, and for some choice of vertices $\bar v$ in $\mathcal H_n$ (see Lemma~\ref{lem:composition}). 

\begin{figure}[h!]
\begin{tikzpicture}
\draw (0,0) -- (1,0);
\draw (1,0) ..controls (2,0.75).. (3,0);
\draw (1,0) ..controls (2,-0.75).. (3,0);
\draw (3,0) -- (4,0);
\end{tikzpicture}
\caption{Bubble graph}
\label{fig:bubble}
\end{figure}

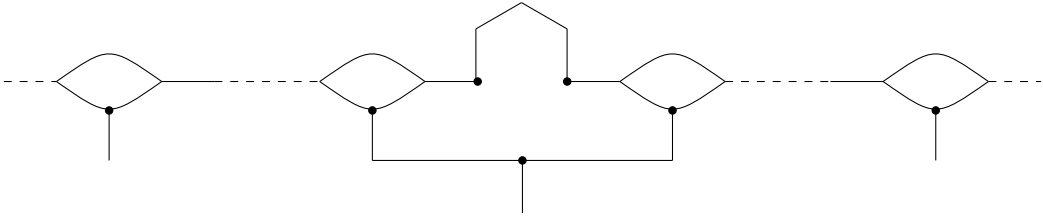
\begin{figure*}[ht!]
%\begin{center}
\begin{tikzpicture}[rotate=30,scale=0.7]
%\foreach \a in {0,120,-120} \draw (3*1,2*sin{60}*1) -- +(\a:1);
\foreach \a in {0,-120} \draw (3*1,2*sin{60}*2) -- +(\a:1);
\foreach \a in {120,-120} \draw (3*1+3*cos{60},2*sin{60}*1+sin{60}) -- +(\a:1);

\draw[rotate=-30] (2.6,3.5) -- (3.6,3.5);
\draw[rotate=-30] (3.6,3.5) .. controls (4.6,4.2) .. (5.6,3.5);
\draw[rotate=-30] (3.6,3.5) .. controls (4.6,2.8) .. (5.6,3.5);
\draw[dashed,rotate=-30] (5.6,3.5) -- (7.6,3.5);
\draw[rotate=-30] (7.6,3.5) -- (8.6,3.5);
\draw[rotate=-30] (8.6,3.5) .. controls (9.6,4.2) .. (10.6,3.5);
\draw[rotate=-30] (8.6,3.5) .. controls (9.6,2.8) .. (10.6,3.5);
\draw[dashed,rotate=-30] (10.6,3.5) -- (11.6,3.5);
\draw[rotate=-30] (4.6,3) -- (4.6,2);
\draw[rotate=-30] (9.6,3) -- (9.6,2);
\draw[fill=black][rotate=-30] (4.6,2.95) circle (2pt);
\draw[fill=black][rotate=-30] (9.6,2.95) circle (2pt);

\draw[dashed,rotate=-30] (2.6-10.7,3.5) -- (3.6-10.7,3.5);
\draw[rotate=-30] (3.6-10.7,3.5) .. controls (4.6-10.7,4.2) .. (5.6-10.7,3.5);
\draw[rotate=-30] (3.6-10.7,3.5) .. controls (4.6-10.7,2.8) .. (5.6-10.7,3.5);
\draw[rotate=-30] (5.6-10.7,3.5) -- (6.6-10.7,3.5);
\draw[dashed,rotate=-30] (6.6-10.7,3.5) -- (8.6-10.7,3.5);
\draw[rotate=-30] (8.6-10.7,3.5) .. controls (9.6-10.7,4.2) .. (10.6-10.7,3.5);
\draw[rotate=-30] (8.6-10.7,3.5) .. controls (9.6-10.7,2.8) .. (10.6-10.7,3.5);
\draw[rotate=-30] (10.6-10.7,3.5) -- (11.6-10.7,3.5);
\draw[rotate=-30] (4.6-10.7,3) -- (4.6-10.7,2);
\draw[rotate=-30] (9.6-10.7,3) -- (9.6-10.7,2);
\draw[fill=black][rotate=-30] (4.6-10.7,2.95) circle (2pt);
\draw[fill=black][rotate=-30] (9.6-10.7,2.95) circle (2pt);

\draw[fill=black][rotate=-30] (11.6-10.7,3.5) circle (2pt);
\draw[fill=black][rotate=-30] (2.6,3.5) circle (2pt);

\draw[rotate=-30] (9.6-10.7,2) .. controls (2,2) .. (4.6,2); 
\draw[rotate=-30,fill=black] (9.6-10.7+2.85,2) circle (2pt);
\draw[rotate=-30] (9.6-10.7+2.85,2) -- ((9.6-10.7+2.85,1);
\end{tikzpicture}
\caption{Spin-network $\mathcal{BO}_n$}
\label{fig:bubble_octopus}
%%\end{center}
\end{figure*}

For a spin-network composition as above (and in the statement of Lemma~\ref{lem:composition}), we say that the spin-network components $\mathcal H_n$ and $\mathcal O_n$ {\it inherit} the labels from the larger spin-network $\mathcal H_{n+1}$, if the spin colors of the components coincide with the respective ones in $\mathcal H_{n+1}$. When the vertices that are used for the composition are clearly understood, and there is no need to remark how the composition is being performed, we simply write the symbol $\circ$ without indicated the vector $\bar v$ of vertex indices.

We now define the following type of spin-networks, denoted by $\mathcal{BO}_n$, and obtained from the graph supporting $\mathcal O_n$ by replacing each lateral vertex by a {\it bubble graph} depicted in Figure~\ref{fig:bubble}, as well as deleting the lower half of the hexagonal edge, and connecting the first two lateral vertical edges. The graph $\mathcal{BO}_n$ is represented in Figure~\ref{fig:bubble_octopus}. Lastly, let $\mathcal {HH}_n$ denote the spin-network obtained from $\mathcal H_n$ by deleting the hexagons along the upper perimeter. For $\mathcal H_2$, for example, this means that one deletes the top hexagon, while for $\mathcal H_3$ one deletes the top $3$ hexagons and so on. For $n=1$ we set $\mathcal{HH}_1$ to consist of a single edge corresponding to the lower perimeter of the hexagon $\mathcal H_1$.

We now set a useful convention on the spin colors labeling edges of the spin-networks $\mathcal H_n$, proceeding inductively on $n$.
We start by setting the labels of the hexagon $\mathcal H_1$ as in Figure~\ref{fig:H1_labeling}.
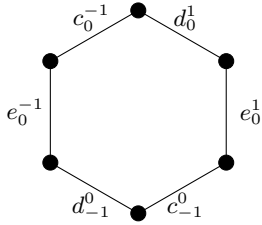
\begin{figure}
%\begin{center}
\begin{tikzpicture}[rotate=90,scale=0.5]
   \newdimen\R
   \R=2.7cm
   \draw (0:\R) \foreach \x in {60,120,...,360} {  -- (\x:\R) };
   \foreach \x in {60,120,...,360} \node[inner sep=2pt,circle,draw,fill] at (\x:\R) {};
   \node (a)  at (-2.5,-1.25) {$c_{-1}^0$};
   \node (a)  at (-2.5,1.25) {$d_{-1}^0$};
   \node (a)  at (0,3) {$e_0^{-1}$};
   \node (a)  at (0,-3) {$e_0^1$};
   \node (a)  at (2.5,-1.25) {$d_0^1$};
   \node (a)  at (2.5,1.25) {$c_0^{-1}$};
\end{tikzpicture}
    \caption{Labeling of the edges of $\mathcal H_1$}
    \label{fig:H1_labeling}
%\end{center}
\end{figure}

Then, in the decomposition $\mathcal H_n = \mathcal H_{n-1} \circ_{\bar v} \mathcal O_{n-1}$, where $n \geq 2$, we number the edges of $\mathcal H_n$ identified with the vertical open edges of $\mathcal O_n$ as follows. The central edge is numbered $0$, then the left branch of $\mathcal O_n$ is numbered in increasing order from center to left with odd numbers, while the right branch is numbered in the same way, but with even numbers. At each configuration as in Figure~\ref{fig:k_configuration}, we indicate the five spin colors involved as $a^{\bullet}_k, b^{\bullet}_k, c^{\bullet}_k, d^{\bullet}_k, e^{\bullet}_k$, and denote the corresponding spin-network by $S^{\bullet}_k$, where $\bullet$ is a placeholder for an arbitrary index. Here, the subscript indicates the level in which the spin-network portion appears. Level $k$, indicates that it is part of the $k+1$ spin-network $\mathcal H_{k+1}$, but it does not lie in the copy of $\mathcal H_k$ inside $\mathcal H_{k+1}$ according to Lemma~\ref{lem:composition}. We will also use another index, which will appear as a superscript, to indicate the position of the spin-network portion within a level. The convention is the following. For levels where an odd number of $e_k$'s appear, we denote the central $e_k$ as $e^0_k$, while those $e_k$'s that lie on the left will be labeled $e_k^{-i}$, and those on the right $e_k^i$, in a symmetric fashion, and with increasing value of $i$ as the $e_k$'s are farther from the center. For levels with even number of $e_k$'s, we omit the central $e_k^0$ and follow the same scheme. 
Observe that for each $k$ we have that some of the edges of spin-networks $S^{\bullet}_k$ of different levels are connected, and therefore the corresponding labels are identified. In this case, we follow the convention that if $S^{\bullet}_k$ and $S^{\bullet}_{k-1}$ meet, the connecting edge will take the label of $S^{\bullet}_{k-1}$, while if $S^{\bullet}_k$ meets another $S^{\bullet}_k$, then the labels reported are those with lower order with respect to the natural lexicographical order $a<b<c<d<e$.
We observe that following the previous conventions, the labels $a$ and $b$ will not appear in the spin-network $\mathcal H_n$ except in the bottom arc, where they are labeled with subscript $-1$. Along the edges of $\mathcal H_n$, there appear arcs connecting at binary vertices. These edges merge, according to the rules of spin-networks at binary vertices. The labels that we report in these cases are dictated by the following ordering. For positive superscripts (i.e. on the right side of the perimeter), we have the order $d<c<e$, while for negative superscripts we have $c<d<e$. Then, on the meeting edges, we relabel the merged edges according to the smallest element. On central cells on top and bottom of the spin-networks, we follow the convention that the largest spin-color label is preserved. The orderings in these cases are the natural ones.
Note that the only spin-colors that appear on the (lateral) perimeter are given by the letters $c,d,e$, while the central perimeter cells are just two (bottom and top), so that the rules given above exhaust all the cases.
%%\newline
%%\newline

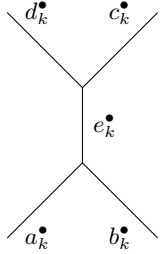
\begin{figure}
%\begin{center}
\begin{tikzpicture}
\draw (-1,1) -- (0,0) -- (1,1);
\draw (-1,-2) -- (0,-1) -- (1,-2);
\draw (0,0) -- (0,-1);
\node (a) at (-0.6,1) {$d_k^\bullet$};
\node (a) at (-0.6,-2) {$a_k^\bullet$};
\node (a) at (0.5,1) {$c_k^\bullet$};
\node (a) at (0.5,-2) {$b_k^\bullet$};
\node (a) at (0.3,-0.5) {$e_k^\bullet$};
\end{tikzpicture}   
\caption{General configuration of $S^{\bullet}_k$ at $k^{\rm th}$ edge, where $\bullet$ is a placeholder that indicates an arbitrary index}
\label{fig:k_configuration}
%\end{center}
\end{figure}

\begin{widetext}
Now, let us define the following quantities. For spin colors $\bar a,\bar b,\bar c, \bar d, \bar e$ following the 
convention above, we define 
\begin{eqnarray*}
\Psi(\bar a, \bar b, \bar c, \bar d, \bar e\ |\ \bar i) 
&=& 
\begin{Bmatrix}
d_{\lfloor\frac{n}{2}\rfloor}^{-\lfloor\frac{n}{2}\rfloor+1} & e_{\lfloor\frac{n}{2}\rfloor+1}^{-\lfloor\frac{n}{2}\rfloor}  & i_{\lfloor\frac{n}{2}\rfloor}^{-\lfloor\frac{n}{2}\rfloor} \\
c_{\lfloor\frac{n}{2}\rfloor+1}^{-\lfloor\frac{n}{2}\rfloor-1}& d_{\lfloor\frac{n}{2}\rfloor-1}^{-\lfloor\frac{n+1}{2}\rfloor+1} & c_{\lfloor\frac{n}{2}\rfloor}^{-\lfloor\frac{n}{2}\rfloor}
\end{Bmatrix} \\
&&
\times \prod_{\lfloor \frac{n+2}{2}\rfloor -1< k \leq 2\lfloor \frac{n}{2} \rfloor} 
\begin{Bmatrix}
c_{k}^{-\lfloor \frac{n+2}{2}\rfloor + \lfloor \frac{k+1}{2} \rfloor} & e_{k}^{-\lfloor \frac{n+2}{2}\rfloor + \lfloor \frac{k+1}{2} \rfloor} & i_{k}^{-\lfloor \frac{n+2}{2}\rfloor + \lfloor \frac{k+1}{2} \rfloor} \\
c_{k}^{-\lfloor \frac{n+2}{2}\rfloor + \lfloor \frac{k+1}{2} \rfloor -1} & c_{k+1}^{-\lfloor \frac{n+2}{2}\rfloor + \lfloor \frac{k}{2} \rfloor} & d_{k}^{-\lfloor \frac{n+2}{2}\rfloor + \lfloor \frac{k+1}{2} \rfloor}
\end{Bmatrix}.
\end{eqnarray*}
Moreover, we define the ``formal involution'' $\iota$ which is applied to a symbol as given above, and acts as follows: $\iota$ exchanges the colors $d$ to $c$, it reverses the signs of the superscripts, and it leaves the subscripts the same. Applying the recoupling (\cite{KL}) we obtain that the following equality holds for all choices of compatible spin colors $a, b, c, d, e, f$:
\begin{equation*}
\begin{tikzpicture}[baseline={([yshift=0.2cm]current bounding box.center)},vertex/.style={anchor=base,
circle,fill=black!25,minimum size=18pt,inner sep=2pt}]
\draw (-1,0) -- (0,0);
\draw (0,0) .. controls (1,0.5).. (2,0);
\draw (0,0) .. controls (1,-0.5).. (2,0);
\draw (2,0) -- (3,0);
\draw (1,-0.4) -- (1,-1.4);
\draw[fill=black] (1,-0.375) circle (2pt);
\draw[fill=black] (0,0) circle (2pt);
\draw[fill=black] (2,0) circle (2pt);

\node (a) at (-0.5,0.35) {$b$};
\node (a) at (2.5,0.35) {$c$};
\node (a) at (1.7,-0.5) {$d$};
\node (a) at (0.3,-0.5) {$a$};
\node (a) at (1,0.6) {$e$};
\node (a) at (1.3,-1) {$f$};
\end{tikzpicture}
=
\begin{Bmatrix}
a & b & f\\
c & d & e 
\end{Bmatrix}
\Delta_f^{-1}
\theta(a,d,f)
\ \
\begin{tikzpicture}[baseline={([yshift=0.1cm]current bounding box.center)},vertex/.style={anchor=base,
circle,fill=black!25,minimum size=18pt,inner sep=2pt}]
\draw (-2,0) -- (0,0) -- (2,0);
\draw (0,0) -- (0,-1);

\draw[fill=black] (0,0) circle (2pt);
\node (a) at (-1,0.35) {$b$};
\node (a) at (1,0.35) {$c$};
\node (a) at (0.2,-0.7) {$f$};
\end{tikzpicture}\ \ .
\end{equation*}
This move will be referred to as ``bubble move'', for simplicity, and its proof is given in Lemma~\ref{lem:bubble-move} below.
Now, we want to show how to decompose the $\mathcal H_{n+1}$ spin-network in terms of lower degrees spin-networks of type $\mathcal {HH}_n$ and $\mathcal {BO}_n$. 
For this purpose, we decompose  $\mathcal H_{n+1}$ in a linear combination of $\mathcal {HH}_n$ and $\mathcal {BO}_n$ as
    \begin{equation}\label{eqn:half-bubble}
    \begin{aligned}
    %\begin{split}
    \mathcal H_{n+1}&(\bar a,\bar b, \bar c, \bar d, \bar e) 
    =
    \Delta_{e_{n+1}^0}\theta(d_{n+1}^{0}, c_{n+1}^{0},e_{n+1}^0)
    \frac{\theta(i_{n+1}^0,c_{n+1}^{-1},c_n^{-1})}{\Delta_{i_{n+1}^0}}\frac{\theta(i_{n+1}^0,d_{n+1}^{1},d_n^{1})}{\Delta_{i_{n+1}^0}}\\
    &\times\frac{\theta(i_{\lfloor \frac{n}{2}\rfloor}^{-\lfloor \frac{n}{2}\rfloor},c_{\lfloor \frac{n}{2}\rfloor+1}^{-\lfloor \frac{n}{2}\rfloor},e_{\lfloor \frac{n}{2}\rfloor+1}^{-\lfloor \frac{n}{2}\rfloor+1})}{\Delta_{i_{\lfloor \frac{n}{2}\rfloor}^{-\lfloor \frac{n}{2}\rfloor}}}
    \frac{\theta(i_{\lfloor \frac{n}{2}\rfloor}^{\lfloor \frac{n}{2}\rfloor},d_{\lfloor \frac{n}{2}\rfloor+1}^{\lfloor \frac{n}{2}\rfloor},e_{\lfloor \frac{n}{2}\rfloor+1}^{\lfloor \frac{n}{2}\rfloor-1})}{\Delta_{i_{\lfloor \frac{n}{2}\rfloor}^{\lfloor \frac{n}{2}\rfloor}}}
    \begin{Bmatrix}
      d_{n+1}^{0}  & c_{n+1}^{-1}  & e_{n+1}^0 \\
      d_{n+1}^1  & c_{n+1}^{0}  & e_{n+2}^0
    \end{Bmatrix}\\
    &\times \mathcal {HH}_n(\bar a,\bar b,\bar c, \bar d, \bar e) \circ_{\bar v} \mathcal {BO}_n(\bar a,\bar b,\bar c, \bar d, \bar e)   
    \sum_{\bar i}\Psi(\bar a,\bar b,\bar c, \bar d, \bar e\ |\ \bar i)\iota(\Psi(\bar a,\bar b,\bar c, \bar d, \bar e\ |\ \bar i)),
    %\end{split}   
    \end{aligned}
    \end{equation}
where $\Psi$ and $\iota$ have been defined above. This result is stated and proved in Lemma~\ref{lem:half-bubble}.
\end{widetext}
The coefficients $\Psi(\bar a, \bar b, \bar c, \bar d, \bar e\ |\ \bar i)$ will also be written as $\Psi_{\bar i}$ for simplicity, when it is clear what spin colors are being considered. Let $\mathcal {BO}_n$ denote an $n$-bubble spin-network as in Figure~\ref{fig:bubble_octopus}. Here we assume that the spin colors of $\mathcal {BO}_n$ are those inherited by Equation~\ref{eqn:half-bubble}. We can now apply Lemma~\ref{lem:bubble-move} on each of the bubbles of $\mathcal {BO}_n$. This will gives us $\mathcal {BO}_n$ as a sum on admissible colors of the spin-networks $\mathcal {O}_n$. The evaluation of $\mathcal {BO}_n$ is obtained through the formula
\begin{eqnarray}
    \begin{aligned}
    \lefteqn{\mathcal {BO}_n(\bar a, \bar b, \bar c, \bar d, \bar e, \bar f)}\\ &=&     
        \prod_{n-1 \leq k \leq 2n - 5}
        \begin{Bmatrix}
        c_k^{\lfloor \frac{k+1}{2}\rfloor-n - 1} & p_k^{\lfloor \frac{k+1}{2}\rfloor-n - 1} & d_k^{\lfloor \frac{k+1}{2}\rfloor-n - 2}\\
        p_{k+1}^{\lfloor \frac{k+2}{2}\rfloor-n -1} & e_{k+1}^{\lfloor \frac{k+2}{2}\rfloor-n -1} & c_{k+1}^{\lfloor \frac{k+2}{2}\rfloor-n} 
        \end{Bmatrix}\\
        &&\times \frac{\theta(c_k^{\lfloor \frac{k+1}{2}\rfloor-n - 1},e_{k+1}^{\lfloor \frac{k+2}{2}\rfloor-n -1},d_k^{\lfloor \frac{k+1}{2}\rfloor-n - 2}}{\Delta_{d_k^{\lfloor \frac{k+1}{2}\rfloor-n - 2}}}\\
        &&\times 
        \begin{Bmatrix}
        d_k^{-\lfloor \frac{k+1}{2}\rfloor + n + 1} & p_k^{-\lfloor \frac{k+1}{2}\rfloor + n + 1} & c_k^{-\lfloor \frac{k+1}{2}\rfloor + n + 2}\\
        p_{k+1}^{-\lfloor \frac{k+2}{2}\rfloor + n +1} & e_{k+1}^{- \lfloor \frac{k+2}{2}\rfloor + n + 1} & d_{k+1}^{-\lfloor \frac{k+2}{2}\rfloor+n} 
        \end{Bmatrix}\\
        &&\times \frac{\theta(d_k^{- \lfloor \frac{k+1}{2}\rfloor + n + 1},e_{k+1}^{- \lfloor \frac{k+2}{2}\rfloor + n + 1},d_k^{-\lfloor \frac{k+1}{2}\rfloor + n + 2}}{\Delta_{d_k^{- \lfloor \frac{k+1}{2}\rfloor + n + 2}}}\\
        &&\times\mathcal O_{n-1}. 
    \end{aligned}
\end{eqnarray}
The proof of this fact can be found in Lemma~\ref{lem:bubble_rewriting}. Observe that the formula holds for $n\geq 4$, since this step does not appear in the cases $n=2, 3$, as a direct inspection reveals. Observe that properly speaking, the coefficients $p_{k+1}$ corresponding to $k=2n-5$ in the product above are identified with other $p$ coefficients through the Schur's Lemma (i.e. a Kroenecker's delta) applied when obtaining Equation~\eqref{eqn:half-bubble}. 

For simplicity of notation, we set $\Phi(\bar a, \bar b, \bar c, \bar d, \bar e, \bar f)$ to be the coefficient appearing in the RHS of Lemma~\ref{lem:bubble_rewriting}. If summation is to be taken over some of the indices, let us denote them as $\bar i$, then we indicate these indices  explicitly as $\Phi(\bar a, \bar b, \bar c, \bar d, \bar e, \bar f\ |\ \bar i)$. For short, in this situation, we also write $\Phi_{\bar i}$, when the labels are understood. 
We have 
\begin{eqnarray}
    \mathcal{HH}_{n+1} \circ_{\bar v} \mathcal O_n = \mathcal H_{n+1},
\end{eqnarray}
where $\bar v$ is the set of vertices as in Lemma~\ref{lem:half-bubble}.

To obtain the general evaluation of the spin network $\mathcal H_n$ for arbitrary $n$, we now proceed inductively by decomposing $\mathcal H_n$ into the composition of $\mathcal H_{n-1}$ and a term $\mathcal O_n$ whose evaluation can be obtained applying recoupling theory. Throughout, the labels $\bar a, \bar b, \bar c, \bar d, \bar e$ indicating the colorings assigned to the spin-network will follow the scheme described above. 

We are now in the position to relate the evaluation of the honeycomb spin-network $\mathcal H_{n+1}$ to the evaluation of $\mathcal H_n$, for any given configuration of the spin colors. First, we absorb all the coefficients $\Psi_{\bar i}$ and the extra factors coming from Lemma~\ref{lem:half-bubble} and Lemma~\ref{lem:bubble_rewriting} to get the new coefficients $\hat \Psi_{\bar i}$ and $\iota \hat \Psi_{\bar i}$. Observe, in fact, that apart from some pre-factors appearing in Lemma~\ref{lem:half-bubble}, all the coefficients are symmetric with respect to the involution $\iota$. We therefore use the symmetry to define the terms $\hat \Psi_{\bar i}$, and give a square root factor of the terms that are fixed by $\iota$. This preserves the symmetry between $\hat \Psi_{\bar i}$ and $\iota \hat \Psi_{\bar i}$.
We have
\begin{equation}
\begin{split}
    \mathcal H_{n+1}(\bar a, \bar b, \bar c, \bar d, \bar e) =& \sum_{\bar i} \hat \Psi(\bar a, \bar b, \bar c, \bar d, \bar e\ |\ \bar i) \iota \hat \Psi(\bar a, \bar b, \bar c, \bar d, \bar e\ | \bar i) \\ &\times\mathcal H_{n}(\bar a, \bar b, \bar c, \bar d, \bar e), 
\label{eqn:inductive}
\end{split}
\end{equation}
where $\mathcal H_{n}$ inherits the spin colors of $\mathcal H_{n+1}$. This important result is stated and proved in Theorem~\ref{thm:inductive_step}.

A fundamental computational/algorithmic issue that arises in the evaluation of $\mathcal H_n$ following Theorem~\ref{thm:inductive_step} regards the inductive determination of the new labels $\bar a', \bar b', \bar c', \bar d', \bar e'$. In fact, observe that while the labels in the bulk of the spin-network $\mathcal H_{n-1}$ obtained from the ``higher degree'' $\mathcal H$ remain the same in the inductive process outlined in the proof of Theorem~\ref{thm:inductive_step}, the same does not hold true for all the labels in the upper perimeter. In fact, as a consequence of the proof, there are $2n-3$ labels that we are going to sum over after applying recoupling an appropriate number of times. For instance, in the evaluation of $\mathcal H_2$, we sum on a single $i$, while in $\mathcal H_3$ we sum over $3$ and so on. These colorings we sum upon are then taken into account in the colorings of $\mathcal H_{n-1}$, and to concretely evaluate $\mathcal H_n$ (see appendix) one needs to iteratively take these colorings into account, and device a scheme for the substitution. As the edges where we sum the spin colors all lie in the upper semi-perimeter of $\mathcal H_{n-1}$ (along $\mathcal O_n$ in the decomposition of $\mathcal H_n$) following the proof of Theorem~\ref{thm:inductive_step}, this is not difficult to perform iteratively.

We find that the number of summation operations needed to evaluate $\mathcal H_n$ grows quadratically with $n$. More specifically, if $a_n$ denotes the number of summations at $n$, we have $a_n = a_{n-1} + 2n - 5$. This is a consequence of Equation~\ref{eqn:inductive} (i.e. Theorem~\ref{thm:inductive_step}) and it is proved in Corollary~\ref{cor:linear_sums} below. 

Another consequence of Equation~\ref{eqn:inductive} is that the evaluation of $\mathcal H_n(\bar a, \bar b, \bar c, \bar d, \bar e)$, with $n\geq 2$, is given by the formula 
\begin{eqnarray*}
\mathcal H_n(\bar a, \bar b, \bar c, \bar d, \bar e) = \sum_{k = 2}^n\Psi_{\bar i_0}\Phi_{\bar i'_0} \theta(c^2_1,e^2_0,b^2_0),
\end{eqnarray*}
where $\Psi_{\bar i_k}$ and $\Phi_{\bar i '_k}$ have been provided above and the index $k$ refers to the superscript of the indices of the spin colors $a, b, c, d, e$. This is shown in Corollary~\ref{cor:honeycomb_formula}. 

The simplicity of the formula given in Corollary~\ref{cor:honeycomb_formula} is not fully representative of the intrinsic complexity of it. In fact, the main issue in computing the evaluation of the Honeycomb network for an arbitrary $n$ is that the indices appearing in the summation symbol, which refer to the quantum $6j$ symbols in the $\Psi$ and $\Phi$ coefficients, are not explicitly given, and need to be considered carefully. In fact, at each step, the spin-colors for the level $n-1$ contain indices of summation from the previous step.

%We conclude this section with a detailed account of the nested structure of the indices appearing in Corollary~\ref{cor:honeycomb_formula}. 
%This is given \textcolor{red}{in Figure~\ref{fig:tree}}, where the depicted tree represents the indices that appear inside the $6j$ symbols. 
%The level $n$ of the tree shows the indices related to the honeycomb $\mathcal H_n$, and as $n$ decreases the indices refer to the copies of $\mathcal H_k$ with $k < n$ in which $\mathcal H_n$ is decomposed (cf. Lemma~\ref{lem:composition}).

\section{Evaluation of $\mathcal H_n$}
\label{algo}

In this section we give the algorithm for the evaluation of the spin-network $\mathcal H_n$ using the steps described in the previous sections. Before giving the general procedure, we will consider an example in detail. We will compute the evaluation of $\mathcal H_n$ for arbitrary colors $a^i_k, b^i_k, c^i_, d^i_k, e^i_k$. The honeycomb $\mathcal H_3$ is the first of the $\mathcal H_n$ where the various steps of the algorithm are nontrivial, and it therefore shows the procedure, but with a complexity relatively small and still simple to perform by hand. The spin-network $\mathcal H_3$ with the labeling described above is shown in Figure~\ref{fig:H3}. Observe that some of the labels are merged into a single spin color. This is due to the fact that at a binary vertex, different colors would imply that the spin-network is trivial, and therefore it is meaningful to consider only the case when all the perimeter labels are grouped in a way that at binary vertices the incoming edges have the same spin color. Also, the composition of projectors at incident binary vertices squares to the identity, and several concatenated projectors result in a single projector. In other words we can consider these edges as a single ``smoothed'' edge. We specify also that the procedure given to pass from pixel space to spin-networks automatically implies that the spin colors are the same at these edges, and the spin-network is not trivial due to mismatches at the binary vertices. 

\begin{center}
\begin{figure}[htb]
\includegraphics[width=2in]{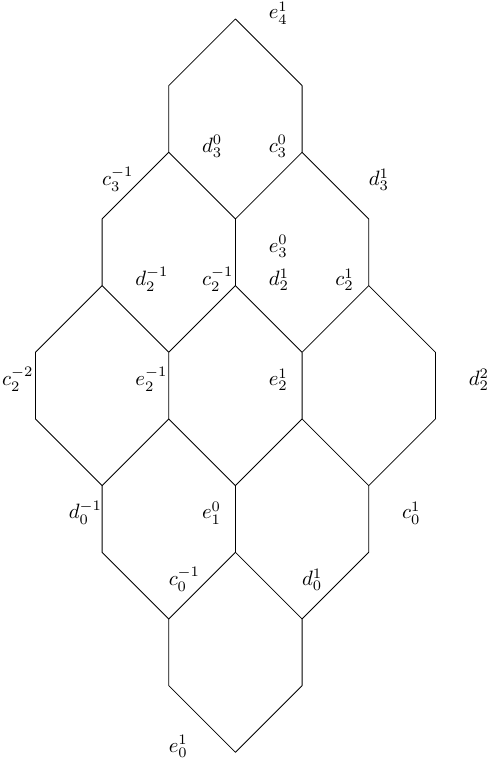}
\caption{The spin-network $\mathcal H_3$ with the labeling scheme adopted in this article. Some of the edges' labels are merged due to the fact that each edge is symmetrized through the Jones-Wenzl projector which, being a projector, is the identity when squared.}
\label{fig:H3}
\end{figure}
\end{center}

First, we apply Lemma~\ref{lem:bubble-move} to the top of the spin-network to obtain a factor of $\begin{Bmatrix}
    d_3^0 & c_3^{-1} & e_3^0\\
    d_3^1 & c_3^0 & e_4^1
\end{Bmatrix}
\cdot \Delta^{-1}_{e_3^0} \theta (d_3^0, c_3^0, e_3^0) 
$
multiplying the spin-network of Figure~\ref{fig:bubble_move}
\begin{center}
\begin{figure}[htb]
\includegraphics[width=2.5in]{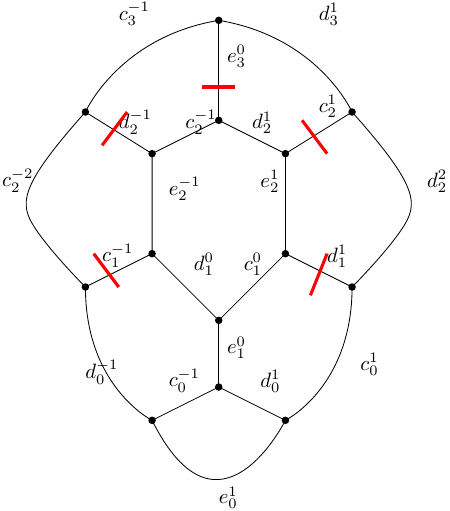}
\caption{First step of the algorithm applied to $\mathcal H_3$.}
\label{fig:bubble_move}
\end{figure}
\end{center}

Next, we apply the recoupling Theorem centered on the edges that have a perpendicular red marker. These recouplings can be applied in parallel, in the sense that they do not depend on each other, and the procedure can be performed simultaneously. Each recoupling now implies that a summation on compatible colors appears, along with a $6j$-symbol. The indices used for summation will be denoted by $p$, and we obtain a global coefficient 
\begin{equation}
\begin{split}
&\sum_{p_3^0} \begin{Bmatrix}
c_2^{-1} & c_3^{-1} & p_3^0\\
d_3^1 & d_2^1 & e_3^0
\end{Bmatrix}
\sum_{p_2^{-1}}
\begin{Bmatrix}
e_2^{-1} & c_2^{-2} & p_2^{-1}\\
c_3^{-1} & c_2^{-1} & d_2^{-1}
\end{Bmatrix}
\sum_{p_3^0} \begin{Bmatrix}
d_2^{1} & d_3^{1} & p_2^1\\
d_2^2 & e_2^1 & c_2^1
\end{Bmatrix}\\
&\times\sum_{p_3^0} \begin{Bmatrix}
d_0^{-1} & c_2^{-2} & p_1^{-1}\\
e_2^{-1} & d_1^0 & c_1^{-1}
\end{Bmatrix}
\sum_{p_3^0} \begin{Bmatrix}
c_1^{0} & e_2^{1} & p_1^1\\
d_2^2 & d_2^1 & e_3^0
\end{Bmatrix}\\
\end{split}
\end{equation}
with the resulting spin-network given in Figure~\ref{fig:crown_bubble}.
\begin{center}
\begin{figure}[htb]
\includegraphics[width=2.5in]{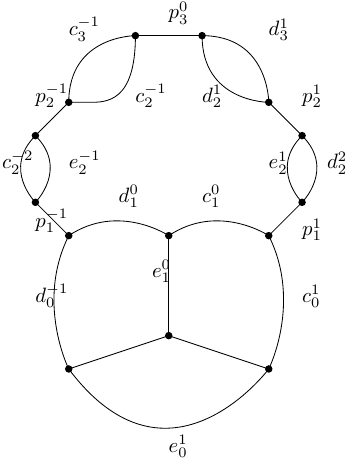}
\caption{Second step of the algorithm applied to $\mathcal H_3$, where we have applied recoupling to all red marked edges of Figure~\ref{fig:bubble_move}.}
\label{fig:crown_bubble}
\end{figure}
\end{center}
Now we can apply the diagrammatic Schur's Lemma (Lemma~7 in \cite{KL}) to all the bubbles appearing in Figure~\ref{fig:crown_bubble} and burst them all. This procedure introduces some $\theta$'s and quantum dimensions in the coefficients, but more importantly introduces Kronecker's deltas among the indices $p$'s. The coefficient multiplying every summand now becomes
\begin{equation}
\begin{split}
&\sum_{p_3^0} \begin{Bmatrix}
c_2^{-1} & c_3^{-1} & p_3^0\\
d_3^1 & d_2^1 & e_3^0
\end{Bmatrix}
\begin{Bmatrix}
e_2^{-1} & c_2^{-2} & p_3^0\\
c_3^{-1} & c_2^{-1} & d_2^{-1}
\end{Bmatrix}
\begin{Bmatrix}
d_2^{1} & d_3^{1} & p_3^0\\
d_2^2 & e_2^1 & c_2^1
\end{Bmatrix}\\
&\times\begin{Bmatrix}
d_0^{-1} & c_2^{-2} & p_3^0\\
e_2^{-1} & d_1^0 & c_1^{-1}
\end{Bmatrix}
\begin{Bmatrix}
c_1^{0} & e_2^{1} & p_3^0\\
d_2^2 & d_2^1 & e_3^0
\end{Bmatrix}\\
&\times
\frac{\theta(c_2^{-2}, e_2^{-1}, p_3^0)}{\Delta_{p_3^0}}
\frac{\theta(c_3^{-1}, c_2^{-1}, p_3^0)}{\Delta_{p_3^0}}
\frac{\theta(d_2^{1}, d_3^{1}, p_3^0)}{\Delta_{p_3^0}}
\frac{\theta(d_2^{2}, e_2^{1}, p_3^0)}{\Delta_{p_3^0}}
\end{split}
\end{equation}
and the spin-network we obtain (for each given configuration of spin-colors) is given by Figure~\ref{fig:pre_tetra}.
\begin{center}
\begin{figure}[htb]
\includegraphics[width=2.5in]{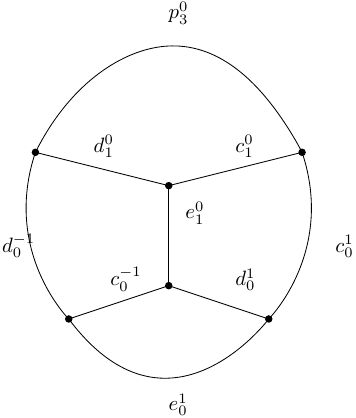}
\caption{Spin-network obtained from Figure~\ref{fig:crown_bubble} after bursting the bubbles through the diagrammatic Schur's Lemma.}
\label{fig:pre_tetra}
\end{figure}
\end{center}
One extra application of Lemma~\ref{lem:bubble-move} now allows us to obtain a sum (over compatible spin-colors) of terms that are proportional to tetrahedra, where the previous coefficients now get an extra factor of
$\begin{Bmatrix}
    d_1^0 & d_0^{-1} & e_1^0\\
    c_0^1 & c_1^0 & p_3^0
\end{Bmatrix}
\cdot \Delta^{-1}_{e_1^0} \theta (d_1^0, c_1^0, e_1^0). 
$
Since the evaluation of the tetrahedron is known (see Section~8.5 in \cite{KL}), the algorithm stops, and we can evaluate the original $\mathcal H_3$ through a sum over the compatible spin-color, evaluations of tetrahedra, and evaluations of $6j$-symbols and $\theta$-nets.   

The procedure just described for $\mathcal H_3$, exemplifies the whole theory in Section~\ref{sec:Honey} for the evaluation of $\mathcal H_n$, and gives a concrete realization of the results of Theorem~\ref{thm:inductive_step} to pass from $\mathcal H_3$ to $\mathcal H_2$ (which is a tetrahedron).

\begin{algorithm}[H]
\caption{General algorithm for the evaluation of $\mathcal H_n$.}
\label{algo:Evaluation}
\begin{algorithmic}[1]
\Require{$\mathcal H_n$ with given spin-colors}
\Comment{Initialization}
\Ensure{$\langle\mathcal H_n\rangle$} \Comment{Evaluation of $\mathcal H_n$}
\While{While $\mathcal H_n$ with $n\geq 3$}
\State{Apply Lemma~\ref{lem:bubble-move} to top of $\mathcal H_n$}
\State{Apply Lemma~\ref{lem:bubble-move} to use recoupling on all edges that connect crown to bulk}
\State{Remove bubbles through Schur's Lemma}
\State{Apply Lemma~\ref{lem:bubble-move} to edges connecting $\mathcal{HH}_{n-1}$ to $\mathcal{B}_n$}
\State{Apply Lemma~\ref{lem:bubble_rewriting} write $\mathcal {BO}_n$ in terms of $\mathcal O_n$}
\State{Apply Lemma~\ref{lemma:half_octopus} to obtain $\mathcal H_n$}
\EndWhile
\State{Perform sum over all compatible colors from the while}
\State{Evaluate the tetrahedra}
\end{algorithmic}
\end{algorithm}

\section{Computation of transition amplitudes}
\label{sec:comput}

To compute the transition amplitudes between two different hexagonal spin-networks $\mathcal H_1$ and $\mathcal H_2$, we compute the physical inner product  defined by Noui and Perez by means of a projector $P$ \cite{NP}. The definition of \cite{NP} was extended in \cite{CosmK} to the case of a projector where the quantum recoupling theory at non-classical $q$ is used. Physically, this corresponds to the case where the cosmological constant is nontrivial. We will refer to projector and physical product in the classical and quantum case interchangeably. 

A direct verification using the definitions found in \cite{CosmK} shows that the Haar integration (depicted as black boxes in \cite{NP}) satisfies the gauge fixing and summation identities found in the appendix of \cite{NP} in the quantum case as well. Using these two properties of the integration and the definition of the projector, we can reduce the computation of the transition amplitudes to evaluations as in Section~\ref{sec:Honey}. 

Let $\mathcal P$ denote the projector of \cite{NP} as well as the modified version of the quantum case (\cite{CosmK}). Then, the transition amplitude between two spin-networks $\mathcal H_n$ and $\mathcal H_n'$ of the same size, i.e. the physical inner product, is defined by the formula
$$
\langle \mathcal H_n | \mathcal H_n'\rangle_{\rm Phys} :=  \langle \mathcal H_n | \mathcal P | \mathcal H_n'\rangle,
$$
where $\langle \bullet | \bullet \rangle$ indicates the inner product defined via the Ashtekar-Lewandowski measure.  

It can be shown that Equation~\ref{eqn:inductive} suffices to evaluate transition amplitudes as follows. Then, the physical inner product between $\mathcal H_n$ and $\mathcal H_n'$ is given by 
\begin{eqnarray}
    \langle \mathcal H_n | \mathcal H_n'\rangle_{\rm Phys} = \overline{\langle \mathcal H_n\rangle} \langle \mathcal H_n'\rangle,
\end{eqnarray}
where $\langle \mathcal H_j\rangle$ indicates the evaluation computed in Section~\ref{sec:Honey} and the overbar denotes complex conjugation. This is proved in Lemma~\ref{lem:inner_product}, and the main step is to use Figure~\ref{fig:eliminate_bulk} to decouple the evaluation of the two spin-networks (see proof of Lemma~\ref{lem:inner_product} below).

    \begin{center}
        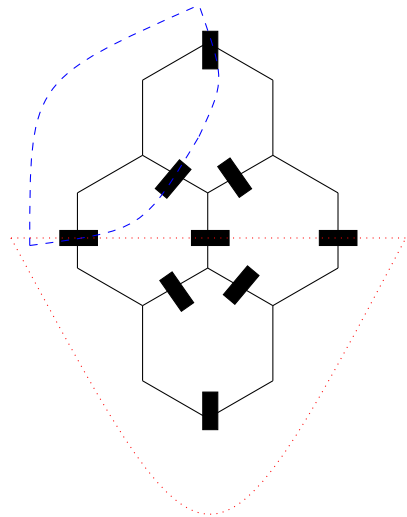
\begin{figure}[h!]
        \begin{tikzpicture}[rotate=30]
        \foreach \a in {0,120,-120} \draw (3*1,2*sin{60}*1) -- +(\a:1);
        \foreach \a in {0,120,-120} \draw (3*1,2*sin{60}*2) -- +(\a:1);
        \foreach \a in {0,120,-120} \draw (3*0+3*cos{60},2*sin{60}*1+sin{60}) -- +(\a:1);
        \foreach \a in {-120,120} \draw (3*1+3*cos{60},2*sin{60}*1+sin{60}) -- +(\a:1);
        \foreach \a in {0,-120} \draw (3*0+3*cos{60},2*sin{60}*2+sin{60}) -- +(\a:1);
        \foreach \a in {-120} \draw (3*1+3*cos{60},2*sin{60}*2+sin{60}) -- +(\a:1); 
        \foreach \a in {120} \draw (3*1+3*cos{60},2*sin{60}*2+sin{60}) -- +(\a:1);  
        \foreach \a in {0} \draw (3*1,2*sin{60}*3) -- +(\a:1);
        \foreach \a in {-120} \draw (3*1,2*sin{60}*3) -- +(\a:1);
        \foreach \a in {0} \draw (3*0+3*cos{60},2*sin{60}*0+sin{60}) -- +(\a:1);    
        \foreach \a in {120} \draw (3*0+3*cos{60},2*sin{60}*0+sin{60}) -- +(\a:1);
        \draw[fill=black,rotate=185] (-2.1,-2.1) rectangle (-2.3,-2.6);
        \draw[fill=black,rotate=60] (3.8,-0.65) rectangle (4,-1.15);
        \draw[fill=black,rotate=110] (1,-3.1) rectangle (1.2,-3.6);
        \draw[fill=black,rotate=110] (2.6,-3.6) rectangle (2.8,-4.1);
        \draw[fill=black,rotate=185] (-3.6,-2.9) rectangle (-3.8,-3.4);
        \draw[fill=black,rotate=60] (3.8,1.1) rectangle (4,0.6);
        \draw[fill=black,rotate=60] (3.8,-2.35) rectangle (4,-2.85);
        \draw[fill=black,rotate=150] (-1,-1.35) rectangle (-0.8,-1.85);
        \draw[fill=black,rotate=150] (-1,-6.15) rectangle (-0.8,-6.65);
        %
        %Lines for gauge fixing below
        %
        \draw[dashed,rotate=-30,color=blue] (-1.5,3.8) ..controls (0,4).. (0.75,5.25);
        \draw[dashed,rotate=-30,color=blue] (0.75,5.25) ..controls (1.1,6).. (0.75,7);
        \draw[dashed,rotate=-30,color=blue] (-1.5,3.8) ..controls (-1.5,6)..(0.75,7);
        \draw[dotted,rotate=-30,color=red] (-1.75,3.9) -- (3.5,3.9);
        \draw[dotted,rotate=-30,color=red] (-1.75,3.9) ..controls(0.9,-1).. (3.5,3.9);
        \end{tikzpicture}
            \caption{Elimination of Haar integration from the bulk. The blue dashed line shows the elimination of diagonal Haar boxes, while the dotted red line shows the elimination of the horizontal Haar box.}
            \label{fig:eliminate_bulk}
            \end{figure}
    \end{center}

    \section{Phase-Space Properties}\label{phase-space}

In this Section we explore the \emph{phase space} for the value of the Perez-Noui projector, relatively to two different sizes of the hexagonal grid, i.e. with $N=2,3$. Furthermore, we set the value of $q$ to be the \emph{classical} one, with $q=-1$. In order to deal with a finite number of coloring configurations for the hexagonal lattices, we need to set bounds on the possible compatible choices for each edge, i.e. we need to impose a minimum $c_m$ and a maximum $c_M$ color value, and enumerate all the possible coloring configurations in that range, with some constraint coming from the coloring procedure. This is a rather complex combinatorial problem: a first straightforward approach would be to randomly draw colors for each edge and imposing the compatibility conditions at the vertices, with the drawback of searching among $(c_M - c_m)^{N_{e}}$ combinations, with $N_e$ the total number of edges, among which only a very small fraction actually yields compatible colorings. 

As it appears, the main problem is to find a procedure that automatically yields compatible color configurations. The solution we put forward is to color the graph using its cycles as the fundamental units: assuming one finds all possible graph cycles $\{\gamma^{(N)}_i\}$ (i.e. sequences of edges that form close loops) for the $N\times N$ hexagonal lattice, then one can build compatible colorings configurations in the range $[c_m, c_M]$ by increasing by one the color of each edge belonging to a given cycle $\gamma^{(N)}_i$, with the possibility of increasing multiple times the value of the colors of the edges belonging to the any given cycle. This is a non-local construction of the colorings that automatically assures the compatibility of each configuration.

Hence, after enumerating all the cycles, one can build all the possible configurations of maximum cycle color $c_M=1$ simply by coloring one cycle per configuration; then one can build all configurations of maximum cycle color $c_M=2$ by coloring all possible combinations of pairs of cycles, including choosing the same cycle twice, and so on.

In this way, we introduce a possible parametrization of the phase space of all the (infinite) compatible colorings that is based on coloring cycles in order to assure the compatibility of each configuration. As a final remark, it is important to consider that finding all possible cycles of the hexagonal graph is again a non-trivial combinatorial problem for which we have developed our own strategy, which will be described in a separate work.

Let us now discuss the results for the projector values among any couple of configurations, in relation to a given range of cycles-colorings, for $N=2,3$. As it turns out, it is not possible to store in memory the results for $N=4$: the number of cycles in this case is $N^{(4)}_c=18370$, which would all yield the same evaluation, while the number of configurations for all pairs of cycles is $N^{(4)}_{c_M=2}=168737635$, which does not allow to compute all the possible transition values and store them on RAM considering $32$-bits precision. Hence, we choose to consider $N=2$ with $c_m=0$ and $c_M=6$, and $N=3$ with $c_m=0$ and $c_M=2$, yielding a total number of transition values of $N^{(3)}_{c_m=0,c_M=6}=1502260081$ and $N^{(3)}_{c_m=0,c_M=2}=1569744400$. %% It would be nice to include a memory estimation, maybe put the numbers in a table

We start by associating an integer index $i$ to each coloring configuration $|\mathcal{H}_n^{(i)}\rangle$, so that the transition matrix $\mathcal{A}$ reads
\begin{equation}
    \mathcal{A}_{ij} = |\langle \mathcal{H}_n^{(i)}|\mathcal{H}_n^{(j)}\rangle |_{\text{Norm}}^2 = \frac{|\overline{\langle\mathcal{H}_n^{(i)}\rangle}\langle\mathcal{H}_n^{(j)}\rangle|^2}{\max{\{|\langle\mathcal{H}_n^{(i)}\rangle|^4, |\langle\mathcal{H}_n^{(j)}\rangle|^4\}}}\,,
\end{equation}
which is such that the diagonal part is normalized to unity. Any random labelling of the coloring states $|\mathcal{H}_n\rangle$ would not yield any apparent structure in the density matrix, hence we decided to rank each state by means of the sum of all the transition probability values between the given state and all the others, i.e.
\begin{equation}
    \mathcal{S}_i=\sum_j \mathcal{A}_{ij}.
\end{equation}
As shown in Figs.~\ref{fig:n2qm1} and~\ref{fig:n3qm1}, if one reorders the labeling according to increasing values of $S_i$, one can use the new ranked indices $\{i_\text{R}\}$ and represent the ranked transition matrix, denoted as $\mathcal{A}_{i_\text{R}j_\text{R}}$. One can then see that the values are automatically structured in a block diagonal form, where different states cluster in what we refer to as \emph{classes}: within one class each state is equivalent to the others, in the sense that the transition probability is unity.

\begin{figure}[t!]
    \centering
    \includegraphics[scale=0.57]{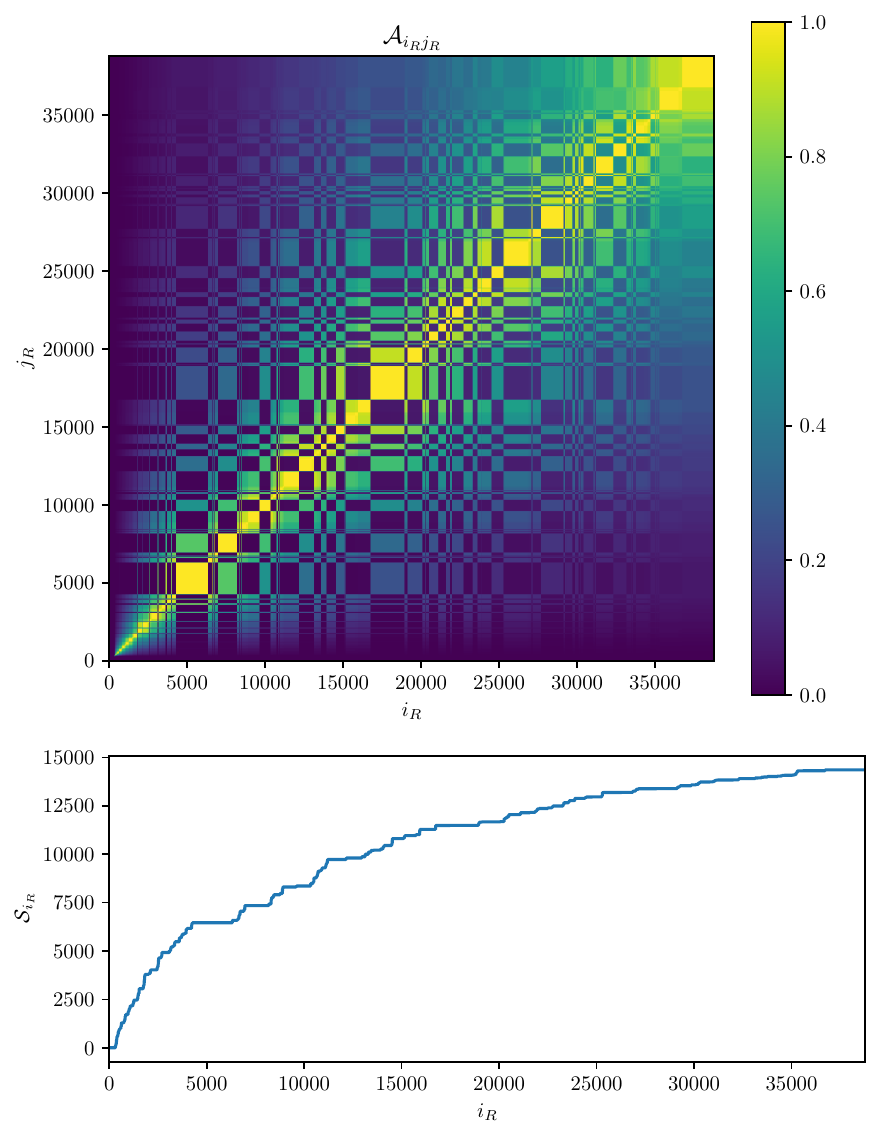}
    \caption{Transition matrix for $N=2$ honeycomb lattice with $q=-1$, maximum cycle-coloring value $c_M=6$.}
    \label{fig:n2qm1}
\end{figure}

\begin{figure}[t!]
    \centering
    \includegraphics[scale=0.57]{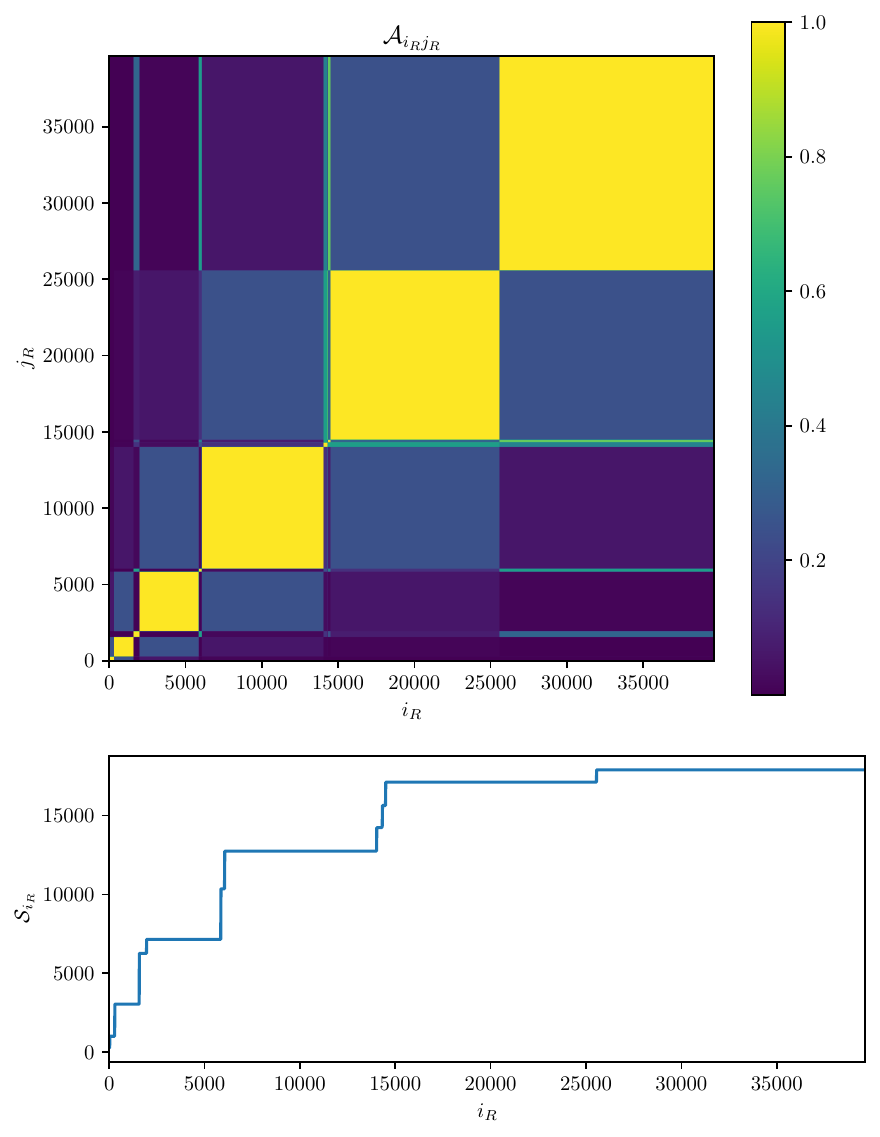}
    \caption{Transition matrix for $N=3$ honeycomb lattice with $q=-1$, maximum cycle-coloring value $c_M=2$.}
    \label{fig:n3qm1}
\end{figure}

Another remarkable property is that all the elements of a class also share the same value of the total sum $\mathcal{S}_i$. This feature provides an additional property of these classes: each element belonging to a class has the same global scalar product with all the other elements within the configuration space. This structure is showing how the Perez-Noui projector can be used to distinguish one class of elements from the other, without any prior information, i.e. training, with a structure that spontaneously emerges when considering a simple ranking of the states. In other words, this results is providing direct evidence for the ideas discussed in~\cite{marciano2022deep}, for which one expects DNNs to emerge as a semi-classical limit of TQNNs. In other words, the block-diagonal part of the transition matrix $\mathcal{A}$ is a way of representing the saddle point that would be found by training a classifier DNNs on the portion of the configuration space we study here. These results seem very promising for using TQNNs as an image classifier.

\section{Relation with the Ising model}
\label{Isi}
 
DNNs present many affinities with statistical models. Specifically, DNNs' architectures can be addressed from the perspective of statistical physics and Gibbs distributions. An area of research that was very active in the 80’s was the one hinging on the implementation of spin-glass models to unveil the way neural networks operate. A flourishing statistical approach is also represented by the so-called Boltzmann machines, networks of neuron-like units symmetrically connected, with neurons can be switched on or off according to a stochastic dynamics. Their learning algorithm \cite{HS} allows to achieve complex pattern recognition tasks by adopting a supervised approach. Boltzmann machines emerge as stochastic recurrent neural networks, which have been cast in statistical physics as the disordered versions of the Ising model \cite{Ising}, i.e. the Sherrington-Kirkpatrick model \cite{SK}. \\

In particular, the generalisation ability was one of the battle field of these investigations inspired by the statistical analysis of phase transitions. Quantum fluctuations can be rephrased as statistical fluctuations, by means of a standard Wick rotation. This latter transforms the partition function of any quantum theory in the equivalent partition function in statistical mechanics provided with the Gibbs-ensemble measure, namely the negative exponential of the Hamiltonian of the system. On the other hand, the connectivity does naturally enter inside the definition of the semi-classical limit of the QNN/TQNN states through the concept of coarse-graining. Borrowing an intuition proper of statistical mechanics, we may think that blocking and coarse-graining procedures, directly applied at the quantum level on the TQNN states, individuate a class of effective TQNN states that are supported on graphs characterised by a lower topological connectivity, and thus by a lower capacity --- we call these states statistical TQNN (STQNN). More concretely, from an operative point of view, the blocking and the coarse-graining procedures are defined in terms of the ability to carry out measurements.\\ %\textcolor{red}{AM: can be this text enough to end this section? Do we need to add more here?}

\section{Conclusions and outlooks}
\label{conclu}

The enhancement of computational methods is the omnipresent driving factor of today’s scientific panorama. The advancement of technological instrumentation has allowed researchers in any field to gather increasingly more data about virtually any aspect of natural science. Nonetheless, advancements in computational ability, with eventual breakthrough, are still required, and probably even more needed that in the past.

Quantum computing may represent a milestone along this trajectory. It may pave the way to a shift of perspective in computational methods, with outputs that are qualitatively different and not comparable with classical computing. Quantum computing may furthermore enable to process data in quantum machines, including quantum computers, exploiting the quantum structures of matter. \\

In this article we have delved into the evaluation of spin-networks of hexagonal shape and arbitrary size. We have hence related these objects to the pixel space of images, in order to apply the new tools provided by topological quantum neural networks (TQNNs). We have then constructed an algorithm for the evaluation of the Perez-Noui projector on $SU(2)$ \cite{NP}, and extended this result to $SU_q(2)$ \cite{CosmK}.
\\

Some aspects of our construction will deserve more detailed investigations in the future. The link between ``local’’ features and ``global’’ ones is among these, and appears of particular interest.

The squared norm of the normalized physical scalar product between two different states $\mathcal A_{nn'} = |\langle \hat{\mathcal H}_n | \hat{\mathcal H'}_n\rangle|^2$ can be used to rank the states as follows: fix the state $| \hat{\mathcal H'}_n\rangle$ and compute the partial sum $\mathcal S_{n'} = \sum_n \mathcal A_{nn'}$; as it happens the value of $\mathcal S_{n'}$ can be used to rank each state $| \hat{\mathcal H'}_n\rangle$. At the end of the ranking procedure one finds that the ranked matrix $\bar{\mathcal A}_{nn'}$ has a block diagonal structure where the blocks are all related to transitions $|\langle \hat{\mathcal H}_n | \hat{\mathcal H'}_n\rangle|^2 = 1$. It also happens that each block is associated to a unique value of the partial sum $\mathcal S_{n'}$.

Hence, the states belonging to the blocks display two fundamental property: a ``local'' property, i.e. the fact that each state has a scalar product equal to one over any other state belonging to the same block; a ``global'' property, i.e. that all the states belonging to a block yield the same value for the partial sum $\mathcal S_{n'}$. This is a remarkable property that links a local feature to a global one. It is possible to associate each of the diagonal blocks to a ``class'' that, upon visual inspection, seems to yield reasonably distinguishable spin networks in terms of the coloring.\\

The origin of the classification mechanism also deserves more detailed analyses. If one assumes that the overall set of all possible transitions, computed using the Perez-Noui projector, allows to compute the Turaev-Viro invariant, then it might be possible that the partial sum $\mathcal S_{n'}$ is related to the Reshetikhin-Turaev invariant. If this is the case, then each diagonal block might be related to a different value of the Reshetikhin-Turaev invariant thus providing a mathematical foundation for the mechanism that is yielding the classification we observe in the ranked transition matrix $\bar{\mathcal A}_{nn'}$.

Assuming that the Turaev-Viro invariant can be computed from the transition matrix ${\mathcal A}_{nn'}$, then the diagonal blocks in $\bar{\mathcal A}_{nn'}$ might represent the saddle point of the Turaev-Viro evaluation, if considering the TV as composed by a sum of the exponential of the values of $\bar{\mathcal A}_{nn’}$.\\

In conclusion, the intrinsic quantumness of the TQNN framework \cite{TQNN,Fields:2022bti}, in which the dynamical evolution of the boundary states (input/output data) is attained through the sum over an infinite amount of intermediate virtual states (filters/hidden layers), has been realised here by applying the physical projectors to the spin-network states. The quantumness that is intrinsic in this proposed new framework allows us to consider a sum over infinite (virtual) hidden layers, being conjectured at the same time to avoid the issues of redundancy and overfitting \cite{marciano2022deep}. This instantiates novel (quantum) algorithms, the effectiveness and accuracy of which we will have to continue  testing, investigating the amount of computational time TQNNs spent in comparison with classical counterparts, such as deep neural networks (DNNs), and delving into the material implementations that exploit topological condensed matter structures described in terms of string-nets \cite{2003PhRvB..67x5316L, Levin:2004mi}. All results can be independently reproduced through the ``idea.deploy" framework~\href{https://github.com/lullimat/idea.deploy}{\nolinkurl{https://github.com/lullimat/idea.deploy}}

\appendix 

\section{Proofs of the results}

In this appendix we collect the main results (and their proofs) used in the article to obtain the algorithm. 

\begin{lemma}\label{lem:composition}
    It holds that $\mathcal H_{n+1} = \mathcal H_n \circ_{\bar v} \mathcal O_n$ for every  $n\in \mathbb N$, and for some choice of vertices $\bar v$ in $\mathcal H_n$. 
\end{lemma}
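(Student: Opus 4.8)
The plan is to prove the identity $\mathcal{H}_{n+1} = \mathcal{H}_n \circ_{\bar v} \mathcal{O}_n$ by induction on $n$, using a careful combinatorial description of the honeycomb graph. First I would fix the geometric convention: realize $\mathcal{H}_n$ as the spin-network supported on the hexagonal lattice with side of size $n$, drawn so that there is a distinguished ``upper perimeter'' consisting of the outermost row of hexagons. The key structural observation is that $\mathcal{H}_{n+1}$, viewed as a graph, is obtained from $\mathcal{H}_n$ by attaching exactly one extra row of $n+1$ hexagons along the upper perimeter of $\mathcal{H}_n$; and that this extra row, together with the edges that connect it to $\mathcal{H}_n$, is precisely the graph underlying $\mathcal{O}_n$ — a single hexagonal cell with two lateral ``arms'' each consisting of $n-1$ vertical edges and $n$ horizontal edges, as in Figure~\ref{fig:octopus}. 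I would make this precise by counting vertices and edges on both sides and exhibiting an explicit bijection of the attaching data.

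The main steps, in order, would be: (1) Base case $n=1$: check directly that $\mathcal{H}_2$ (the four-cell honeycomb of Figure~\ref{fig:honey}) decomposes as $\mathcal{H}_1$ (a single hexagon, Figure~\ref{fig:H1_labeling}) with $\mathcal{O}_1$ glued on; here $\mathcal{O}_1$ has lateral arms with $n-1 = 0$ vertical edges and $n = 1$ horizontal edge, so this is a short finite verification. (2) Inductive step: assume the graph $\mathcal{H}_n$ has an upper perimeter of the asserted shape with $n$ exposed trivalent vertices $v_1, \ldots, v_r$ (the vertices where the next row attaches); show that gluing the legs $a_1, \ldots, a_r$ of $\mathcal{O}_n$ to $\bar v = (v_1, \ldots, v_r)$ yields a graph isomorphic to $\mathcal{H}_{n+1}$, and that the new graph again has an upper perimeter of the correct shape so the induction propagates. (3) Check the labeling/admissibility bookkeeping: by the labeling conventions set up before the statement, the spin colors inherited from $\mathcal{H}_{n+1}$ restrict correctly to $\mathcal{H}_n$ and $\mathcal{O}_n$, and a triple is non-admissible in $\mathcal{H}_{n+1}$ at a given vertex if and only if it is non-admissible in the corresponding component — so both sides vanish simultaneously, and the equality of spin-network \emph{values} follows from the equality of labeled graphs.

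I expect the main obstacle to be step (2): giving a clean, checkable description of the attaching vertices $\bar v$ and verifying that $\mathcal{O}_n$ is exactly the ``collar'' between $\mathcal{H}_n$ and $\mathcal{H}_{n+1}$, rather than merely morally so. The bookkeeping of how many vertical versus horizontal edges appear in the lateral arms (the $n-1$ and $n$ in Figure~\ref{fig:lateral}), and how the arms wrap around the sides of the hexagonal block, is the delicate point; it is easy to be off by one. The cleanest way to handle this is probably to set up coordinates on the hexagonal lattice (e.g. index hexagons by axial coordinates), write down explicitly which cells and edges belong to $\mathcal{H}_n \subset \mathcal{H}_{n+1}$ and which belong to the complement, and then simply match the complement with the graph of $\mathcal{O}_n$ cell-by-cell and edge-by-edge. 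Once the graph isomorphism is pinned down, the rest — admissibility and the resulting identity of evaluations — is essentially formal, given the composition conventions established above and the convention that non-admissible colorings evaluate to zero.
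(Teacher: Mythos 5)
Your proposal is correct and follows essentially the same route as the paper: induction on $n$, working purely at the level of the underlying graphs (so that colorings and admissibility are handled formally by the composition conventions), with the base case $\mathcal H_2 = \mathcal H_1 \circ \mathcal O_1$ and the inductive step identifying the ``collar'' between $\mathcal H_n$ and $\mathcal H_{n+1}$ with $\mathcal O_n$. The only small point the paper adds that you omit is that, when detaching the top layer, spurious binary vertices may appear inside edges and are removed using idempotency of the Jones--Wenzl symmetrizer.
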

\begin{proof}
    The proof is by induction on $n$, and it does not depend on the colorings of the spin-networks, so that we can omit keeping track of the spin colors, but we can just consider the underlying graphs. The base of induction holds true, since for $n=1$ the graph $\mathcal H_n$ is just a single hexagon cell, and $\mathcal H_2$ is obtained by attaching $\mathcal O_1$ on the three top vertices of the hexagon cell. Suppose now that the result has been proved for some $k>1$, and let us consider $\mathcal H_{k+1}$. in the graph of $\mathcal H_{k+1}$ we can isolate a top layer of the graph, where we imagine of cutting the edges that connect the outer perimeter to the inner vertices of $\mathcal H_{k+1}$. This leaves a graph $\mathcal H_k$ and detaches an open-edge graph that is readily identified with a copy of the graph $\mathcal O_k$. We observe that in this step it might be necessary to eliminate extra vertices inside the edges of the detached graph. This is indeed possible since a binary vertex can be eliminated, and the symmetrizers that label the two edges are compacted into one, using idempotency of the Jones-Wenzl symmetrizer. %{\color{red} Figure probably needed here to clarify the result }
\end{proof}

\begin{widetext}
\begin{lemma}\label{lem:bubble-move}
    The following equality holds for all choices of compatible spin colors $a, b, c, d, e, f$:
    \begin{equation*}
        \begin{tikzpicture}[baseline={([yshift=0.2cm]current bounding box.center)},vertex/.style={anchor=base,
        circle,fill=black!25,minimum size=18pt,inner sep=2pt}]
        \draw (-1,0) -- (0,0);
        \draw (0,0) .. controls (1,0.5).. (2,0);
        \draw (0,0) .. controls (1,-0.5).. (2,0);
        \draw (2,0) -- (3,0);
        \draw (1,-0.4) -- (1,-1.4);
        \draw[fill=black] (1,-0.375) circle (2pt);
        \draw[fill=black] (0,0) circle (2pt);
        \draw[fill=black] (2,0) circle (2pt);
        
        \node (a) at (-0.5,0.35) {$b$};
        \node (a) at (2.5,0.35) {$c$};
        \node (a) at (1.7,-0.5) {$d$};
        \node (a) at (0.3,-0.5) {$a$};
        \node (a) at (1,0.6) {$e$};
        \node (a) at (1.3,-1) {$f$};
        \end{tikzpicture}
        =
        \begin{Bmatrix}
        a & b & f\\
        c & d & e 
        \end{Bmatrix}
        \Delta_f^{-1}
        \theta(a,d,f)
        \ \
        \begin{tikzpicture}[baseline={([yshift=0.1cm]current bounding box.center)},vertex/.style={anchor=base,
        circle,fill=black!25,minimum size=18pt,inner sep=2pt}]
        \draw (-2,0) -- (0,0) -- (2,0);
        \draw (0,0) -- (0,-1);

        \draw[fill=black] (0,0) circle (2pt);
        \node (a) at (-1,0.35) {$b$};
        \node (a) at (1,0.35) {$c$};
        \node (a) at (0.2,-0.7) {$f$};
        \end{tikzpicture}\ \ .
    \end{equation*}
\end{lemma}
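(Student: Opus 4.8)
The plan is to recognize the left-hand side as a composite of two standard recoupling moves from Kauffman--Lins \cite{KL}, namely the fusion (recoupling) identity for a pair of parallel edges meeting at trivalent vertices, followed by the diagrammatic Schur lemma (bubble evaluation). First I would read off the diagram: the two arcs labeled $a$ and $d$ join the edge $b$ on the left and the edge $c$ on the right at the two trivalent vertices, while an external leg $f$ emanates from the lower arc. One extra vertex sits on that lower arc so that the picture is genuinely a trivalent graph; absorbing that vertex forces the pair of arcs to form a ``theta-like'' sub-configuration with external leg $f$, which is exactly the setting of the recoupling theorem.

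The key steps, in order, are: (1) apply the recoupling move of \cite{KL} (Chapter~7, the $6j$-symbol identity) to the two parallel strands $a$ and $d$ together with the incident edges $b$, $c$ and the leg $f$; this replaces the two-vertex ``bubble with a tail'' by a single internal edge (a closed loop, i.e. a bubble on colors $a$, $d$ with the $f$-leg), multiplied by the $6j$-symbol $\begin{Bmatrix} a & b & f\\ c & d & e\end{Bmatrix}$, while the outer edges $b$, $c$ now pass straight through the new configuration. (2) Apply the diagrammatic Schur lemma / bubble identity (Lemma~7 of \cite{KL}): a closed bubble on colors $a$, $d$ attached to an $f$-leg equals $\theta(a,d,f)\,\Delta_f^{-1}$ times the single edge colored $f$. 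Combining (1) and (2) collapses the whole left-hand diagram to a single trivalent vertex joining $b$, $c$, $f$, with overall coefficient $\begin{Bmatrix} a & b & f\\ c & d & e\end{Bmatrix}\,\Delta_f^{-1}\,\theta(a,d,f)$, which is precisely the right-hand side. (3) Check that admissibility of the triples is handled consistently: both sides vanish unless $(b,c,f)$, $(a,b,e)$, $(a,d,f)$, $(c,d,e)$ are admissible, and the stated conventions set non-admissible evaluations to zero, so the identity holds vacuously in the degenerate cases.

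The only real subtlety — and the step I would be most careful about — is bookkeeping the placement of labels and the orientation of the recoupling move so that the arguments of the $6j$-symbol and of $\theta$ come out in exactly the arrangement displayed (the distinction between $e$ and $f$ as the ``internal'' vs. ``external'' channel, and which three colors meet at the surviving vertex). This is purely a matter of matching our diagram to the normalization conventions of \cite{KL}; once the correspondence of edges is fixed, the identity is an immediate two-line consequence of the recoupling theorem and the bubble lemma, with no genuine computation required.
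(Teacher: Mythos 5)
Your proposal is correct and follows essentially the same route as the paper: a single application of the Kauffman--Lins recoupling theorem on the edge $e$, followed by the diagrammatic Schur lemma (Lemma~7 of \cite{KL}), which forces the internal color to equal $f$ and produces the factor $\Delta_f^{-1}\,\theta(a,d,f)$ multiplying the trivalent vertex $(b,c,f)$. The only minor imprecision is that your step (1) writes the $6j$-symbol with $f$ directly, whereas the recoupling first yields a sum over an internal color $i$ and the restriction $i=f$ is exactly what the Schur step supplies, as in the paper's proof.
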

\begin{proof}
    Applying recoupling to the edge $e$ we obtain the equality 
        
        \begin{equation*}
        \begin{tikzpicture}[baseline={([yshift=0.2cm]current bounding box.center)},vertex/.style={anchor=base,
        circle,fill=black!25,minimum size=18pt,inner sep=2pt}]
        \draw (-1,0) -- (0,0);
        \draw (0,0) .. controls (1,0.5).. (2,0);
        \draw (0,0) .. controls (1,-0.5).. (2,0);
        \draw (2,0) -- (3,0);
        \draw (1,-0.4) -- (1,-1.4);
        \draw[fill=black] (1,-0.375) circle (2pt);
        \draw[fill=black] (0,0) circle (2pt);
        \draw[fill=black] (2,0) circle (2pt);
        
        \node (a) at (-0.5,0.35) {$b$};
        \node (a) at (2.5,0.35) {$c$};
        \node (a) at (1.7,-0.5) {$d$};
        \node (a) at (0.3,-0.5) {$a$};
        \node (a) at (1,0.6) {$e$};
        \node (a) at (1.3,-1) {$f$};
        \end{tikzpicture}
        =
        \sum_i
        \begin{Bmatrix}
        a & b & i\\
        c & d & e 
        \end{Bmatrix}
        \ \ 
        \begin{tikzpicture}[baseline={([yshift=0.3cm]current bounding box.center)},vertex/.style={anchor=base,
        circle,fill=black!25,minimum size=18pt,inner sep=2pt}]
        \draw (-2,0) -- (0,0) -- (2,0);
        \draw (0,0) -- (0,-0.5);
        \draw (0,-0.5) .. controls (0.3,-0.75) .. (0,-1);
        \draw (0,-0.5) .. controls (-0.3,-0.75) .. (0,-1);
        \draw (0,-1) -- (0,-1.5);

        \draw[fill=black] (0,0) circle (2pt);
        \draw[fill=black] (0,-0.5) circle (1.5pt);
        \draw[fill=black] (0,-1) circle (1.5pt);
        
        \node (a) at (-1,0.35) {$b$};
        \node (a) at (1,0.35) {$c$};
        \node (a) at (0.2,-0.3) {$i$};
        \node (a) at (0.5,-0.75) {$d$};
        \node (a) at (-0.5,-0.75) {$a$};
        \node (a) at (0.3,-1.25) {$f$};
        \end{tikzpicture}\ \ .
    \end{equation*}
    Now, applying Lemma~7 of \cite{KL} (i.e. the diagrammatic Schur's Lemma) we find that the only term in the sum that is not trivial is the one corresponding to $i=f$, and moreover the previous equation becomes
    
    \begin{equation*}
        \begin{tikzpicture}[baseline={([yshift=0.2cm]current bounding box.center)},vertex/.style={anchor=base,
        circle,fill=black!25,minimum size=18pt,inner sep=2pt}]
        \draw (-1,0) -- (0,0);
        \draw (0,0) .. controls (1,0.5).. (2,0);
        \draw (0,0) .. controls (1,-0.5).. (2,0);
        \draw (2,0) -- (3,0);
        \draw (1,-0.4) -- (1,-1.4);
        \draw[fill=black] (1,-0.375) circle (2pt);
        \draw[fill=black] (0,0) circle (2pt);
        \draw[fill=black] (2,0) circle (2pt);
        
        \node (a) at (-0.5,0.35) {$b$};
        \node (a) at (2.5,0.35) {$c$};
        \node (a) at (1.7,-0.5) {$d$};
        \node (a) at (0.3,-0.5) {$a$};
        \node (a) at (1,0.6) {$e$};
        \node (a) at (1.3,-1) {$f$};
        \end{tikzpicture}
        =
        \begin{Bmatrix}
        a & b & f\\
        c & d & e 
        \end{Bmatrix}
        \Delta_f^{-1} \theta(a,d,f)
        \ \ 
        \begin{tikzpicture}[baseline={([yshift=0.1cm]current bounding box.center)},vertex/.style={anchor=base,
        circle,fill=black!25,minimum size=18pt,inner sep=2pt}]
        \draw (-2,0) -- (0,0) -- (2,0);
        \draw (0,0) -- (0,-1);

        \draw[fill=black] (0,0) circle (2pt);
        \node (a) at (-1,0.35) {$b$};
        \node (a) at (1,0.35) {$c$};
        \node (a) at (0.2,-0.7) {$f$};
        \end{tikzpicture}\ \ ,
    \end{equation*}
    where $\theta(a,d,f)$ denotes the value of the $\theta$-net
    \begin{center}
    \begin{tikzpicture}[baseline={([yshift=-0.1cm]current bounding box.center)},vertex/.style={anchor=base,
        circle,fill=black!25,minimum size=18pt,inner sep=2pt}]
    \draw (0,0) .. controls (1,1) .. (2,0);
    \draw (0,0) -- (2,0);
    \draw (0,0) .. controls (1,-1) .. (2,0);
    
    \node (a) at (1,0.35) {$f$};
    \node (a) at (1,1) {$a$};
    \node (a) at (1,-1) {$d$};
    \end{tikzpicture}\ \ .
    \end{center}
    The evaluation of the latter $\theta$-net cancels out with the of the renormalizations (see Appendix A of \cite{Rovelli}) $\sqrt{\theta(a,d,f)}$ of the two $3$-vertices $(a,d,f)$ and $(a,d,i)$, completing the proof.
\end{proof}
\end{widetext}

\begin{lemma}\label{lem:half-bubble}
    Let $\mathcal H_n(\bar a,\bar b,\bar c, \bar d, \bar e)$ be a honeycomb spin-network with the labeling scheme described above. Then we have 
    \begin{equation}
    \begin{split}
    \mathcal H_{n+1}&(\bar a,\bar b, \bar c, \bar d, \bar e) 
    =
    \Delta_{e_{n+1}^0}\theta(d_{n+1}^{0}, c_{n+1}^{0},e_{n+1}^0)\\
    &\times \frac{\theta(i_{n+1}^0,c_{n+1}^{-1},c_n^{-1})}{\Delta_{i_{n+1}^0}}\frac{\theta(i_{n+1}^0,d_{n+1}^{1},d_n^{1})}{\Delta_{i_{n+1}^0}}\\
    &\frac{\theta(i_{\lfloor \frac{n}{2}\rfloor}^{-\lfloor \frac{n}{2}\rfloor},c_{\lfloor \frac{n}{2}\rfloor+1}^{-\lfloor \frac{n}{2}\rfloor},e_{\lfloor \frac{n}{2}\rfloor+1}^{-\lfloor \frac{n}{2}\rfloor+1})}{\Delta_{i_{\lfloor \frac{n}{2}\rfloor}^{-\lfloor \frac{n}{2}\rfloor}}}
    \frac{\theta(i_{\lfloor \frac{n}{2}\rfloor}^{\lfloor \frac{n}{2}\rfloor},d_{\lfloor \frac{n}{2}\rfloor+1}^{\lfloor \frac{n}{2}\rfloor},e_{\lfloor \frac{n}{2}\rfloor+1}^{\lfloor \frac{n}{2}\rfloor-1})}{\Delta_{i_{\lfloor \frac{n}{2}\rfloor}^{\lfloor \frac{n}{2}\rfloor}}}\\
    &\times\begin{Bmatrix}
      d_{n+1}^{0}  & c_{n+1}^{-1}  & e_{n+1}^0 \\
      d_{n+1}^1  & c_{n+1}^{0}  & e_{n+2}^0
    \end{Bmatrix}\\
    &\times \mathcal {HH}_n(\bar a,\bar b,\bar c, \bar d, \bar e) \circ_{\bar v} \mathcal {BO}_n(\bar a,\bar b,\bar c, \bar d, \bar e)\\    
    &\times \sum_{\bar i}\Psi(\bar a,\bar b,\bar c, \bar d, \bar e\ |\ \bar i)\iota(\Psi(\bar a,\bar b,\bar c, \bar d, \bar e\ |\ \bar i)),\\
\end{split}    
    \end{equation}
    where $\Psi$ and $\iota$ were defined above, and the fractions appear only when $n>2$. 
\end{lemma}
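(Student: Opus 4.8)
The plan is to prove the identity by peeling off one layer of the honeycomb, namely the copy of $\mathcal O_n$ that sits on top of $\mathcal H_n$ inside $\mathcal H_{n+1}$, and to recognize each diagrammatic reduction as a factor on the right-hand side. By Lemma~\ref{lem:composition} we have the graphical decomposition $\mathcal H_{n+1} = \mathcal H_n \circ_{\bar v} \mathcal O_n$, so it suffices to track what happens to $\mathcal O_n$ and to the top perimeter of $\mathcal H_n$ under a sequence of recoupling moves. First I would apply the bubble move of Lemma~\ref{lem:bubble-move} at the central top cell of $\mathcal H_{n+1}$; this is exactly where the prefactor $\Delta_{e_{n+1}^0}\theta(d_{n+1}^{0}, c_{n+1}^{0},e_{n+1}^0)$ together with the $6j$-symbol $\begin{Bmatrix} d_{n+1}^{0} & c_{n+1}^{-1} & e_{n+1}^0 \\ d_{n+1}^1 & c_{n+1}^{0} & e_{n+2}^0 \end{Bmatrix}$ is produced, using the labeling convention fixed in Figure~\ref{fig:H1_labeling} and the $k$-configuration of Figure~\ref{fig:k_configuration}. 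The appearance of $e_{n+2}^0$ rather than $e_{n+1}^0$ in the lower-right slot is dictated by the merging rule at binary vertices on the top central cell, where the largest spin-color label is preserved.

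Next I would run recoupling along the $2n-3$ edges of the top layer that attach $\mathcal O_n$'s vertical legs to the bulk, working from the two extreme lateral cells inward. Each such recoupling move introduces a $6j$-symbol and, after closing the resulting bubble by the diagrammatic Schur's Lemma (Lemma~7 of \cite{KL}), a factor $\theta(\cdot,\cdot,\cdot)/\Delta_{\cdot}$ and a Kronecker delta identifying the summation index; this is the origin of both the four explicit $\theta/\Delta$ fractions written in the statement and of the product structure packaged in $\Psi(\bar a,\bar b,\bar c,\bar d,\bar e\ |\ \bar i)$. The two fractions $\theta(i_{\lfloor n/2\rfloor}^{-\lfloor n/2\rfloor},c_{\lfloor n/2\rfloor+1}^{-\lfloor n/2\rfloor},e_{\lfloor n/2\rfloor+1}^{-\lfloor n/2\rfloor+1})/\Delta_{i_{\lfloor n/2\rfloor}^{-\lfloor n/2\rfloor}}$ and its mirror come from the two outermost lateral cells, while $\theta(i_{n+1}^0,c_{n+1}^{-1},c_n^{-1})/\Delta_{i_{n+1}^0}$ and its mirror come from the cell adjacent to the central one; that these are the only fractions surviving, and that they appear only for $n>2$, is checked by direct inspection of the small cases (for $n=2$ the lateral branches are too short for the intermediate recouplings to occur). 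After these moves the leftover diagram is precisely $\mathcal{HH}_n \circ_{\bar v} \mathcal{BO}_n$ with the inherited colors, since deleting the top-perimeter hexagons of $\mathcal H_n$ leaves $\mathcal{HH}_n$ and the reduced copy of $\mathcal O_n$ with bubbles along its lateral vertices is by definition $\mathcal{BO}_n$.

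The remaining point is to verify that the accumulated $6j$-coefficients assemble exactly into $\sum_{\bar i}\Psi_{\bar i}\,\iota(\Psi_{\bar i})$. Here I would argue by the left--right reflection symmetry of the honeycomb: the recoupling coefficients produced on the left branch and on the right branch are related by the formal involution $\iota$ (swap $c\leftrightarrow d$, flip superscript signs, keep subscripts), because that involution is precisely the algebraic shadow of the mirror symmetry of $\mathcal H_{n+1}$ about its vertical axis; comparing the index ranges in the product defining $\Psi$ with the set of lateral edges on which recoupling is performed gives the match term by term, and the sum over $\bar i$ is the sum over the intermediate colors introduced. The main obstacle I anticipate is purely bookkeeping: making the index conventions of Figure~\ref{fig:k_configuration} line up with the floor-function expressions in $\Psi$, and confirming that every Schur's-Lemma delta collapses the correct pair of summation variables so that no spurious index survives. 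This is where one must be careful, and it is where examining $\mathcal H_3$ explicitly (as done in Section~\ref{algo}) serves as the decisive sanity check that the general formula has the right shape.
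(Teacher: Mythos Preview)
Your proposal is correct and follows essentially the same route as the paper's proof: decompose $\mathcal H_{n+1}=\mathcal H_n\circ_{\bar v}\mathcal O_n$ via Lemma~\ref{lem:composition}, apply Lemma~\ref{lem:bubble-move} at the top cell to produce the $\Delta\cdot\theta\cdot\{6j\}$ prefactor, run recoupling along the connecting edges (and the lower edges of the extremal hexagons) to generate the $6j$-symbols packaged in $\Psi$ and $\iota\Psi$, and finally burst the four residual bubbles with the diagrammatic Schur's Lemma to obtain the $\theta/\Delta$ fractions and the graph $\mathcal{HH}_n\circ_{\bar v}\mathcal{BO}_n$. The paper organizes the argument by first treating $n=2$ explicitly and then the general case, whereas you defer the small-case check to the end, but the content is the same.
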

\begin{proof}
    We proceed by using Lemma~\ref{lem:composition}, and recoupling theory. First, let us consider the simpler case $n=2$, which is verified as follows. We write $\mathcal H_2(\bar a, \bar b, \bar c, \bar d, \bar e) = \mathcal H_1(\bar a, \bar b, \bar c, \bar d, \bar e)\circ_{\bar v} \mathcal{O}_2(\bar a, \bar b, \bar c, \bar d, \bar e)$. Let us omit the labels $\bar a, \bar b, \bar c, \bar d, \bar e$ for simplicity. Then, we apply Lemma~\ref{lem:bubble-move} on the central edge right above the hexagonal cell of $\mathcal H_1$, as given in the decomposition of $\mathcal H_2$ above. The resulting spin-network is, with complex $6j$ factor multiplying it, $\mathcal{HH}_1\circ_{\bar v}\mathcal{BO}_1$, where $\bar v$ consists of the two vertices on the sides of the hexagonal cell $\mathcal H_1$. The complex factor appearing in the sum is the $6j$-symbol determined by Lemma~\ref{lem:bubble-move}. In this case there is a single $6j$, which is seen directly to coincide with the first factor in the formula in the statement of the lemma. The terms containing $\Psi$ and the fractions containing $\theta$ and $\Delta$ are not present in this case. 
    The case for arbitrary $n$ is similar, and it only requires more applications of the recoupling theorem. More specifically, we apply Lemma~\ref{lem:bubble-move} to the top of the spin network. This produces the factor
     $$
    \Delta_{e_{n+1}^0}\theta(d_{n+1}^{0}, c_{n+1}^{0},e_{n+1}^0)
    \begin{Bmatrix}
      d_{n+1}^{0}  & c_{n+1}^{-1}  & e_{n+1}^0 \\
      d_{n+1}^1  & c_{n+1}^{0}  & e_{n+2}^0
    \end{Bmatrix}
    $$
    which is the prefactor appearing in the statement. Then, we apply recoupling to the edges that are used to connect $\mathcal O_n$ to $\mathcal H_n$ in the decomposition $\mathcal H_{n+1} = \mathcal H_n \circ_{\bar v} \mathcal O_n$, along with the bottom edges of the most lateral hexagons. For each coloring, we now have to consider the coefficients appearing at each application of the recoupling theorem. 
    Now, proceeding along the left side of the graph supporting $\mathcal O_n$, we encounter the recoupling of edges $d_{k}^{-\lfloor \frac{n+2}{2}\rfloor + \lfloor \frac{k+1}{2}}$, while going in the opposite direction gives the recoupling on $c_{k}^{\lfloor \frac{n+2}{2}\rfloor - \lfloor \frac{k+1}{2}}$. This gives rise to the $6j$-symbols that constitute the terms indexed by $k$ appearing in the product that defines $\Psi$ and $\iota \Psi$, where one needs to sum over all the compatible $i$, with respect to the other entries of the $6j$-symbol. Finally, on the bottom edges of the equatorial belt of hexagons in the copy of $\mathcal H_n$ found inside of $\mathcal H_{n+1}$ we get recoupling on $c_{\lfloor \frac{n}{2}\rfloor + 1}^{-\lfloor \frac{n}{2}\rfloor-1}$ and $d_{\lfloor \frac{n}{2}\rfloor + 1}^{\lfloor \frac{n}{2}\rfloor+1}$, which gives rise to the last two factors in the definition of $\Psi$ and $\iota \Psi$. At this point we have a decomposition of the geometric support of the spin-network as $\mathcal {H}_n\circ \mathcal{BO}_n$ with extra four bubbles. Using Lemma~7 in \cite{KL} to burst the bubbles we obtain $\mathcal {H}_n\circ \mathcal{BO}_n$, and the remaining factors that consist of the fractions in the statement of the lemma. This completes the proof. 
\end{proof}

\begin{lemma}\label{lem:bubble_rewriting}
We have, for any $n\geq 4$, the equality
\begin{eqnarray*}
    \begin{aligned}
    \lefteqn{\mathcal {BO}_n(\bar a, \bar b, \bar c, \bar d, \bar e, \bar f)}\\ &=&     
        \prod_{n-1 \leq k \leq 2n - 5}
        \begin{Bmatrix}
        c_k^{\lfloor \frac{k+1}{2}\rfloor-n - 1} & p_k^{\lfloor \frac{k+1}{2}\rfloor-n - 1} & d_k^{\lfloor \frac{k+1}{2}\rfloor-n - 2}\\
        p_{k+1}^{\lfloor \frac{k+2}{2}\rfloor-n -1} & e_{k+1}^{\lfloor \frac{k+2}{2}\rfloor-n -1} & c_{k+1}^{\lfloor \frac{k+2}{2}\rfloor-n} 
        \end{Bmatrix}\\
        &&\times \frac{\theta(c_k^{\lfloor \frac{k+1}{2}\rfloor-n - 1},e_{k+1}^{\lfloor \frac{k+2}{2}\rfloor-n -1},d_k^{\lfloor \frac{k+1}{2}\rfloor-n - 2}}{\Delta_{d_k^{\lfloor \frac{k+1}{2}\rfloor-n - 2}}}\\
        &&\times 
        \begin{Bmatrix}
        d_k^{-\lfloor \frac{k+1}{2}\rfloor + n + 1} & p_k^{-\lfloor \frac{k+1}{2}\rfloor + n + 1} & c_k^{-\lfloor \frac{k+1}{2}\rfloor + n + 2}\\
        p_{k+1}^{-\lfloor \frac{k+2}{2}\rfloor + n +1} & e_{k+1}^{- \lfloor \frac{k+2}{2}\rfloor + n + 1} & d_{k+1}^{-\lfloor \frac{k+2}{2}\rfloor+n} 
        \end{Bmatrix}\\
        &&\times \frac{\theta(d_k^{- \lfloor \frac{k+1}{2}\rfloor + n + 1},e_{k+1}^{- \lfloor \frac{k+2}{2}\rfloor + n + 1},d_k^{-\lfloor \frac{k+1}{2}\rfloor + n + 2}}{\Delta_{d_k^{- \lfloor \frac{k+1}{2}\rfloor + n + 2}}}\\
        &&\times\mathcal O_{n-1}. 
    \end{aligned}
\end{eqnarray*}
\end{lemma}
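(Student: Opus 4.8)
The strategy is to apply Lemma~\ref{lem:bubble-move} repeatedly to the bubbles appearing in $\mathcal{BO}_n$. First I would recall that, by construction, $\mathcal{BO}_n$ is obtained from the octopus graph $\mathcal{O}_n$ by replacing each \emph{lateral} vertex with a bubble graph of the type in Figure~\ref{fig:bubble}, by deleting the lower half of the hexagonal edge, and by gluing the first two lateral vertical edges; hence $\mathcal{BO}_n$ contains precisely a collection of bubble subgraphs sitting along the two lateral branches, each carrying a pendant vertical leg. A direct inspection of the cases $n=2,3$ shows that no such bubble occurs (the lateral branches are too short), which is exactly why the formula is stated only for $n\geq 4$. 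For $n\geq 4$ the bubbles along, say, the left branch are naturally indexed by the level $k$ with $n-1\leq k\leq 2n-5$, and those on the right branch are their mirror images under the formal involution $\iota$.

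Next I would apply Lemma~\ref{lem:bubble-move} to each such bubble. Since the bubbles are supported on pairwise disjoint portions of the graph, the moves are mutually independent and may be carried out in any order, or even simultaneously. For the bubble at level $k$ on the left branch, the incoming edge, the outgoing edge, the two arcs and the pendant vertical leg are labelled according to the scheme fixed in Section~\ref{sec:Honey}; identifying these labels with the roles $(a,b,c,d,e,f)$ of Lemma~\ref{lem:bubble-move} shows that bursting it contributes exactly the $6j$-symbol and the ratio $\theta/\Delta$ with the superscripts $\lfloor\frac{k+1}{2}\rfloor-n-1$, $\lfloor\frac{k+2}{2}\rfloor-n-1$, $\lfloor\frac{k+1}{2}\rfloor-n-2$ displayed in the first two lines of the product in the statement, and replaces the bubble by a single trivalent vertex. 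The right branch is treated identically and, by the $d\leftrightarrow c$ and sign-reversal symmetry of the labeling scheme, it yields the $\iota$-conjugate factor, i.e.\ the third and fourth lines of the product.

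Then I would collect the outcome. After all bubbles have been burst, the accumulated coefficients multiply to the product over $n-1\leq k\leq 2n-5$ in the statement, while the underlying graph --- having lost every bubble, and taking into account the deletion of the lower hexagonal half-edge and the gluing of the first two vertical edges already built into $\mathcal{BO}_n$ --- is exactly the octopus graph $\mathcal{O}_{n-1}$ with the inherited labels. The single subtle point is the top of the range, $k=2n-5$: there the label $p_{2n-4}$ is not a free summation index but is identified with another $p$ through the Kronecker delta (Schur's Lemma) already produced in deriving Equation~\eqref{eqn:half-bubble}; recording this identification finishes the proof.

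The main obstacle is the index bookkeeping: one must check that the bubble move at level $k$ produces precisely the stated superscripts rather than neighbouring values, and that the pendant legs of the bubbles are exactly the vertical octopus edges, so that the residual graph is genuinely $\mathcal{O}_{n-1}$ and not a graph with spurious open ends. This is a careful but mechanical verification once the conventions of Section~\ref{sec:Honey} and the geometry of Figure~\ref{fig:bubble_octopus} are made explicit; no idea beyond the bubble move of Lemma~\ref{lem:bubble-move} is needed.
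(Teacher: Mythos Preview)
Your proposal is correct and follows exactly the same approach as the paper: the paper's proof is the single sentence ``This is an application of the bubble move of Lemma~\ref{lem:bubble-move} to each bubble of $\mathcal{BO}_n$,'' and your write-up is simply a more detailed unpacking of that one line, including the index bookkeeping and the remark about the $k=2n-5$ identification that the paper mentions in the surrounding text rather than in the proof itself.
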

\begin{proof}
  This is an application of the bubble move of Lemma~\ref{lem:bubble-move} to each bubble of $\mathcal {BO}_n$.
\end{proof}

\begin{lemma}\label{lemma:half_octopus}
We have 
$$
\mathcal{HH}_{n+1} \circ_{\bar v} \mathcal O_n = \mathcal H_{n+1},
$$
where $\bar v$ is the set of vertices as in Lemma~\ref{lem:half-bubble}.
\end{lemma}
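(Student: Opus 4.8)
The statement is combinatorial in nature, exactly as Lemma~\ref{lem:composition}, and the plan is to prove it at the level of the underlying colored graphs. The composition $\circ_{\bar v}$ merely attaches the open legs of $\mathcal O_n$ to prescribed vertices of $\mathcal{HH}_{n+1}$ without identifying any pair of colored edges, and by the labeling conventions of Section~\ref{sec:Honey} both $\mathcal{HH}_{n+1}$ and $\mathcal O_n$ carry precisely the colors inherited from $\mathcal H_{n+1}$; hence it suffices to produce a color-preserving graph isomorphism $\mathcal{HH}_{n+1}\circ_{\bar v}\mathcal O_n \cong \mathcal H_{n+1}$, after which equality of evaluations is automatic (the composition of labeled graphs has a well-defined evaluation).

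First I would pin down $\bar v$. By definition $\mathcal{HH}_{n+1}$ is $\mathcal H_{n+1}$ with the $2(n+1)-3 = 2n-1$ upper-perimeter hexagonal cells removed, so its jagged top profile is a central downward notch flanked by $n-1$ further notches on each side, and $\bar v = (v_1,\dots,v_r)$ is the ordered list of vertices lying on that profile, numbered from the centre outward by the odd/even scheme of Section~\ref{sec:Honey} --- equivalently, the attachment vertices already singled out in Lemma~\ref{lem:half-bubble}. Let $\mathcal C$ denote the open-edge sub-network of $\mathcal H_{n+1}$ consisting of all edges deleted in forming $\mathcal{HH}_{n+1}$, with its natural open ends at the vertices $\bar v$ and at the three outward perimeter directions (one central edge, two lateral). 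The core claim is that, once the binary vertices produced along the lateral perimeter of $\mathcal C$ are suppressed by Jones--Wenzl idempotency exactly as in the proof of Lemma~\ref{lem:composition}, $\mathcal C$ is color-preservingly isomorphic to $\mathcal O_n$: the apex cell of $\mathcal C$ is the single hexagon of $\mathcal O_n$, the central outward edge is the central leg, and the two descending chains of partial side-cells unfold into the two tentacles of Figure~\ref{fig:lateral}, each producing $n-1$ vertical legs and $n$ horizontal edges. Granting this, re-gluing $\mathcal C\simeq\mathcal O_n$ along $\bar v$ is by construction the inverse of the removal, so it reproduces $\mathcal H_{n+1}$ edge by edge and color by color. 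A short induction on $n$ --- base case $n=1$, where $\mathcal{HH}_2\circ_{\bar v}\mathcal O_1 = \mathcal H_2$ is checked by hand using the degenerate conventions for $\mathcal{HH}_1$ and $\mathcal O_1$, and inductive step peeling off one further notch/tentacle pair on each side --- makes the unfolding step rigorous.

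The main obstacle I anticipate is the bookkeeping inside that core claim: matching the number and placement of the binary vertices along the jagged profile of $\mathcal{HH}_{n+1}$ with the $2n-1$ leg endpoints of the zig-zag tentacles of $\mathcal O_n$ (the $n-1$ vertical versus $n$ horizontal count is precisely where an off-by-one error could slip in), and verifying that the smoothing of concatenated symmetrizers is performed consistently on both sides so that the edge-color identifications dictated by the conventions of Section~\ref{sec:Honey} agree. None of this is deep, but it must be carried out carefully; once the graph isomorphism is in place the lemma follows, and it is exactly the move invoked in step~7 of Algorithm~\ref{algo:Evaluation} to pass from the bubble/recoupling side back to a genuine $\mathcal H$ spin-network.
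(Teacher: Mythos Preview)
Your proposal is correct and follows essentially the same approach as the paper, which simply observes that $\mathcal{HH}_{n+1}$ is $\mathcal H_{n+1}$ with the upper hexagonal cells removed, and that attaching $\mathcal O_n$ reconstructs exactly those cells. You have supplied considerably more detail than the paper's two-sentence direct-inspection argument (the explicit identification of the excised crown $\mathcal C$ with $\mathcal O_n$ via suppression of binary vertices, the leg count, and the optional induction), but the underlying idea is identical.
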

\begin{proof}
  This result follows from a direct inspection of the graph support of the spin-networks $\mathcal{HH}_{n+1}$ and $\mathcal O_n$. In fact, $\mathcal{HH}_{n+1}$ is obtained from $\mathcal{H}_{n+1}$ by discarding the upper hexagonal cells. But then, attaching $\mathcal O_n$ re-constructs the missing hexagonal cells.
\end{proof}

\begin{theorem}\label{thm:inductive_step}
Let $\mathcal H_{n+1}$ denote a honeycomb of size $n+1$, and let $\bar a, \bar b, \bar c, \bar d, \bar e$ denote compatible spin colors according to the scheme described above. Then 
\begin{equation}
\begin{split}
\mathcal H_{n+1}(\bar a, \bar b, \bar c, \bar d, \bar e) =& \sum_{\bar i} \hat \Psi(\bar a, \bar b, \bar c, \bar d, \bar e\ |\ \bar i) \iota \hat \Psi(\bar a, \bar b, \bar c, \bar d, \bar e\ | \bar i) \\ &\times\mathcal H_{n}(\bar a, \bar b, \bar c, \bar d, \bar e),\\
\end{split}
\end{equation}
where $\mathcal H_{n}$ inherits the spin colors of $\mathcal H_{n+1}$.
\end{theorem}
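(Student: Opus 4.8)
The plan is to chain together the lemmas already established in this appendix, in exactly the order in which the algorithm of Section~\ref{algo} applies them. Starting from $\mathcal H_{n+1}(\bar a, \bar b, \bar c, \bar d, \bar e)$, I would first invoke Lemma~\ref{lem:composition} to write $\mathcal H_{n+1} = \mathcal H_n \circ_{\bar v} \mathcal O_n$ for the appropriate vertex list $\bar v$. Then Lemma~\ref{lem:half-bubble} rewrites this as a scalar prefactor (the $\Delta$, $\theta$ and $6j$ terms displayed there, with the $\theta/\Delta$ fractions present only when $n>2$) times $\mathcal{HH}_n \circ_{\bar v} \mathcal{BO}_n$ times $\sum_{\bar i}\Psi_{\bar i}\,\iota(\Psi_{\bar i})$. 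The next move is Lemma~\ref{lem:bubble_rewriting} (for $n\geq 4$; for $n=2,3$ this step is vacuous, as noted in the excerpt), which replaces $\mathcal{BO}_n$ by $\mathcal O_{n-1}$ up to the product of $6j$-symbols and $\theta/\Delta$ fractions that define the coefficient $\Phi_{\bar i}$. Finally Lemma~\ref{lemma:half_octopus} recognizes $\mathcal{HH}_n \circ_{\bar v}\mathcal O_{n-1}$ (with $n$ in place of $n+1$ there, i.e.\ $\mathcal{HH}_{n}\circ_{\bar v}\mathcal O_{n-1}=\mathcal H_{n}$) as precisely $\mathcal H_n$, carrying the inherited colorings.

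\textbf{Assembling the coefficient.} At this point I have expressed $\mathcal H_{n+1}$ as (product of scalar prefactors from Lemma~\ref{lem:half-bubble}) $\times\ \sum_{\bar i}\Psi_{\bar i}\iota(\Psi_{\bar i})\ \times$ (product of $\Phi$-type factors from Lemma~\ref{lem:bubble_rewriting}) $\times\ \mathcal H_n$. To put this in the claimed form, I would absorb all the scalar prefactors and the $\Phi$-factors into the summand. Here one uses the observation made in the main text: apart from the finitely many prefactors of Lemma~\ref{lem:half-bubble}, every factor in sight is invariant under the formal involution $\iota$ (which swaps $c\leftrightarrow d$, reverses superscript signs, fixes subscripts), because the honeycomb and all the moves applied are left--right symmetric. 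So I would distribute the $\iota$-symmetric factors evenly between the two halves $\hat\Psi_{\bar i}$ and $\iota\hat\Psi_{\bar i}$ (taking a square root for any factor that is literally fixed by $\iota$, e.g.\ the central $6j$ and $\theta$'s attached to $e_{n+1}^0$), and bundle the genuinely asymmetric prefactors into one of the two, or split them as $\sqrt{\,\cdot\,}$ as well. By construction this yields $\mathcal H_{n+1}=\sum_{\bar i}\hat\Psi_{\bar i}\,\iota\hat\Psi_{\bar i}\,\mathcal H_n$, which is the assertion. The base cases $n=1,2$ are checked by hand: $\mathcal H_2$ is a tetrahedron and the single application of Lemma~\ref{lem:bubble-move} gives a sum over one index $i$ of a $6j$-squared times $\mathcal H_1$, matching the statement with the empty $\Psi$-product.

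\textbf{Main obstacle.} The conceptual content is light once the lemmas are in place; the real difficulty is purely bookkeeping, namely verifying that the indices generated by the summations in Lemma~\ref{lem:half-bubble} and Lemma~\ref{lem:bubble_rewriting}, together with the Kronecker deltas produced when the bubbles are burst via the diagrammatic Schur's Lemma (Lemma~7 of \cite{KL}), land on precisely the $2n-3$ free summation indices $\bar i$ claimed, and that the relabeling conventions fixed earlier (the orderings $a<b<c<d<e$, the superscript sign conventions, the merging rules at binary vertices) are respected so that the colorings of the resulting $\mathcal H_n$ genuinely coincide with those inherited from $\mathcal H_{n+1}$. I would handle this by tracking the edges touched at each step along the upper semi-perimeter of the copy of $\mathcal H_n$ inside $\mathcal H_{n+1}$ and checking that none of the bulk labels of $\mathcal H_n$ is altered, so that the only freedom is in the $\bar i$'s summed over. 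Once that indexing is pinned down, the symmetry argument of the previous paragraph closes the proof.
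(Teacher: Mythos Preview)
Your proposal is correct and follows essentially the same route as the paper's own proof: decompose via Lemma~\ref{lem:composition}, apply Lemma~\ref{lem:half-bubble} to pass to $\mathcal{HH}_n\circ_{\bar v}\mathcal{BO}_n$, use Lemma~\ref{lem:bubble_rewriting} to reduce $\mathcal{BO}_n$ to $\mathcal O_{n-1}$, and then invoke Lemma~\ref{lemma:half_octopus} to recognize $\mathcal H_n$, absorbing all scalar factors into $\hat\Psi_{\bar i}$ and $\iota\hat\Psi_{\bar i}$ via the $\iota$-symmetry exactly as the main text describes. Your discussion of the bookkeeping and the square-root splitting of $\iota$-fixed factors is somewhat more explicit than the paper's terse version, but the logical skeleton is identical.
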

\begin{proof}
  We apply the lemmas previously proved to obtain the result. To simplify notation we omit writing the labels of the spin-networks, but we will assume throughout to follow the conventions outlined above. Observe that using Lemma~\ref{lem:composition} we can write $\mathcal H_{n+1} = \mathcal H_n \circ_{\bar v} \mathcal O_n$. Then, following the convention for the spin colors established in the paragraph preceding Lemma~\ref{lem:half-bubble}, the edges connecting $\mathcal O_n$ to $\mathcal H_n$ are labeled by $e_k$, with $k = 0, \ldots , 2n-2$. So, we apply Lemma~\ref{lem:half-bubble} to these edges to obtain $\mathcal H_{n+1} = \sum_{\bar i}\Psi_{\bar i}\iota\Psi_{\bar i} \mathcal {HH}_n \circ_{\bar v} \mathcal {BO}_n$, where spin colors are intended as in the lemma. From Lemma~\ref{lem:bubble_rewriting} we have $\mathcal {BO}_n = \hat \Psi_{\bar i}\iota\hat\Psi_{\bar i} \mathcal O_{n-1}$.  Therefore, we have found that $\mathcal H_{n+1} = \sum_{\bar i} \hat\Psi_{\bar i} \iota\hat\Psi_{\bar i} \mathcal {HH}_n \circ_{\bar v} \mathcal O_{n-1}$.
  Lastly, we apply Lemma~\ref{lemma:half_octopus} to rewrite $\mathcal {HH}_n\circ_{\bar v} \mathcal O_{n-1} = \mathcal H_n$. This completes the proof.
\end{proof}

\begin{corollary}\label{cor:linear_sums}
The number of summation operations needed to evaluate $\mathcal H_n$ grows quadratically with $n$. More specifically, if $a_n$ denotes the number of summations at $n$, we have $a_n = a_{n-1} + 2n - 5$.
\end{corollary}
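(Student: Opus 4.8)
The plan is to extract the recursion $a_n = a_{n-1} + 2n-5$ directly from the proof of Theorem~\ref{thm:inductive_step} by carefully counting the number of new summation indices introduced when passing from $\mathcal H_n$ to $\mathcal H_{n-1}$, and then to solve this linear recursion to obtain the quadratic growth. First I would recall that Theorem~\ref{thm:inductive_step} expresses $\mathcal H_{n+1}$ as $\sum_{\bar i}\hat\Psi_{\bar i}\iota\hat\Psi_{\bar i}\,\mathcal H_n$, so the summations accumulate additively: $a_{n+1} = a_n + (\text{number of components of }\bar i\text{ at this step})$. Thus the entire content of the corollary reduces to counting the length of $\bar i$ in the decomposition step.

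The key steps, in order, are as follows. (1) Track the summation indices through the three lemmas composing Theorem~\ref{thm:inductive_step}. Applying Lemma~\ref{lem:half-bubble} to the edges $e_0,\dots,e_{2n-2}$ connecting $\mathcal O_n$ to $\mathcal H_n$ introduces the indices $\bar i$ appearing in $\Psi$; these come from the recouplings along the two lateral branches of $\mathcal O_n$ plus the two equatorial edges, but several of these are immediately killed by the diagrammatic Schur's Lemma (the ``bubble bursting'') which produces Kronecker deltas. (2) Apply Lemma~\ref{lem:bubble_rewriting}: here $\mathcal{BO}_n$ is rewritten in terms of $\mathcal O_{n-1}$ via the product over $n-1 \leq k \leq 2n-5$, which has exactly $(2n-5)-(n-1)+1 = n-3$ factors, each carrying a pair of $6j$-symbols contributing $p$-indices, but again the structure of the bubble move collapses many of these. (3) Combine the two counts, account for the identifications noted in the text (the $p_{k+1}$ with $k=2n-5$ is identified via Schur's Lemma with indices coming from Equation~\eqref{eqn:half-bubble}), and verify the net count of genuinely free summation indices at level $n$ equals $2n-5$. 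This matches the worked example: for $\mathcal H_2 \to$ tetrahedron one sums over a single index (consistent with $a_2 = a_1 + (2\cdot 2 - 5) = a_1 - 1$ being degenerate and the base case being set explicitly), for $\mathcal H_3$ one sums over $3$ indices as exhibited explicitly in Section~\ref{algo}, confirming $2\cdot 3 - 5 = 1$... — here one must be careful about the indexing convention, since the example in Section~\ref{algo} passing from $\mathcal H_3$ shows five $p$-summations before bursting, reducing to the stated count after Schur; I would align the indexing so that $a_n - a_{n-1} = 2n-5$ reproduces exactly the $\mathcal H_3$ computation. (4) Finally, solve the recursion: $a_n = a_{n_0} + \sum_{k=n_0+1}^{n}(2k-5) = a_{n_0} + (n^2 - n) - (n_0^2 - n_0) - 4(n - n_0)$, which is manifestly quadratic in $n$, establishing the claimed growth rate.

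The main obstacle I expect is step (1)–(3): the bookkeeping of which recoupling indices survive the bubble-bursting via Schur's Lemma and which are identified with one another. The raw number of recouplings applied in Lemma~\ref{lem:half-bubble} and Lemma~\ref{lem:bubble_rewriting} is larger than $2n-5$, and getting the exact net count requires tracking the Kronecker deltas produced at each bubble burst and the identifications flagged in the remark after Lemma~\ref{lem:bubble_rewriting} (``the coefficients $p_{k+1}$ corresponding to $k=2n-5$ ... are identified with other $p$ coefficients through Schur's Lemma''). Once the combinatorial count $a_n - a_{n-1} = 2n-5$ is pinned down against the $\mathcal H_3$ worked example as a consistency check, the remainder — solving the linear recursion and reading off quadratic growth — is entirely routine.
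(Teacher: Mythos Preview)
Your approach is essentially the same as the paper's: both derive the recursion from Theorem~\ref{thm:inductive_step} by counting the free summation indices introduced at each inductive step, then conclude quadratic growth. The paper resolves the bookkeeping you flag as the main obstacle with a sharper count than you give: the first round of recouplings (Lemma~\ref{lem:half-bubble}) contributes $2n-1$ indices, the second round (Lemma~\ref{lem:bubble_rewriting}) contributes none since that lemma contains no sum, Schur's Lemma at the top collapses three of the indices to one (net $-2$), and the merging of four labels at the base of $\mathcal O_n$ suppresses two more, yielding $2n-1-2-2 = 2n-5$; adopting this accounting would let you dispense with the hedging in step~(3).
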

\begin{proof}
This is an immediate consequence of Theorem~\ref{thm:inductive_step} using induction.  In fact, at each step, i.e. for a fixed $n$, we have a sum on $2n-5$ indices. To see this, observe that from the proof  of Theorem~\ref{thm:inductive_step} we have to apply recoupling $2n-1$ times twice. The second round of recouplings does not introduce new labels in the summations, since in Lemma~\ref{lem:bubble_rewriting} there is no sum. In order to apply Lemma~\ref{lemma:half_octopus}, we need to apply Lemma~7 from \cite{KL} on the top of the spin-network, where three of the indices upon which we sum are present. This allows us to reduce the sum to one single index, and factor a summation of quantum dimensions coming from $i_0$ in the final result. Moreover, we notice that the base of $\mathcal O_n$ has a merging of $4$ labels and therefore a two more sums are suppressed. This gives the total number of $2n-5$ summation indices. Now, we have reduced our evaluation to $\mathcal H_{n-1}$, which inductively carries a summation over $a_{n-1}$ indices by induction. This completes the proof. 
\end{proof}

\begin{corollary}\label{cor:honeycomb_formula}
The evaluation of $\mathcal H_n(\bar a, \bar b, \bar c, \bar d, \bar e)$, with $n\geq 2$, is given by the formula 
\begin{eqnarray*}
\mathcal H_n(\bar a, \bar b, \bar c, \bar d, \bar e) = \sum_{k = 2}^n\hat\Psi_{\bar i_k}\iota\hat\Psi_{\bar i_k} \theta(c^2_1,e^2_0,b^2_0),
\end{eqnarray*}
where the coefficients $\hat\Psi_{\bar i_k}$ were given above and the index $k$ refers to the iteration of the application of Theorem~\ref{thm:inductive_step}.
\end{corollary}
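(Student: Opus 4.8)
The plan is to iterate Equation~\ref{eqn:inductive} (Theorem~\ref{thm:inductive_step}) from level $n$ down to the base level $2$, and then to evaluate $\mathcal H_2$ directly, repackaging the result into the compact form of the statement. Concretely, I would argue by induction on $n$. For the base case $n=2$, note that $\mathcal H_2$ is a tetrahedral spin-network once the bivalent vertices are absorbed via idempotency of the Jones--Wenzl projector, exactly as in the proof of Lemma~\ref{lem:composition}. Its evaluation is the (classical or quantum) tetrahedral net of Kauffman--Lins (Section~8.5 of \cite{KL}). Applying Lemma~\ref{lem:bubble-move} one further time to $\mathcal H_2$ --- this is precisely the last step exhibited for $\mathcal H_3$ in Section~\ref{algo}, where ``one extra application of Lemma~\ref{lem:bubble-move}'' produces terms proportional to tetrahedra --- yields a $6j$-symbol, a surviving $\theta$-net factor with colors $c_1^2, e_0^2, b_0^2$ lying in the bulk, and the renormalization factors of the trivalent vertices (Appendix~A of \cite{Rovelli}). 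Absorbing these pre-factors and performing the symmetric square-root splitting introduced just before Theorem~\ref{thm:inductive_step} packages the coefficient as $\hat\Psi_{\bar i_2}\iota\hat\Psi_{\bar i_2}$, summed over the single surviving index $\bar i_2$, so that $\mathcal H_2 = \sum_{\bar i_2}\hat\Psi_{\bar i_2}\iota\hat\Psi_{\bar i_2}\,\theta(c_1^2, e_0^2, b_0^2)$.

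For the inductive step, assume the formula holds for $\mathcal H_n$. By Theorem~\ref{thm:inductive_step} we have $\mathcal H_{n+1}(\bar a,\bar b,\bar c,\bar d,\bar e) = \sum_{\bar i_{n+1}}\hat\Psi_{\bar i_{n+1}}\iota\hat\Psi_{\bar i_{n+1}}\,\mathcal H_n(\bar a,\bar b,\bar c,\bar d,\bar e)$, where $\mathcal H_n$ inherits the colors of $\mathcal H_{n+1}$ and, crucially, some of those inherited colors on the upper semi-perimeter are themselves the summation variables $\bar i_{n+1}$. Substituting the inductive hypothesis for $\mathcal H_n$ and collecting all summations into a single nested sum over $\bar i_2,\dots,\bar i_{n+1}$ gives the formula at level $n+1$, with the understanding --- already flagged in the discussion following Theorem~\ref{thm:inductive_step} --- that the arguments of the $6j$-symbols inside $\hat\Psi_{\bar i_k}$ at level $k$ contain the outer indices $\bar i_{k+1},\dots,\bar i_{n+1}$ through the relabeling of the perimeter edges. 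The factor $\theta(c_1^2, e_0^2, b_0^2)$ is attached to the innermost tetrahedron $\mathcal H_2$; since its three colors belong to the fixed bulk of the network and are never among the perimeter edges that get relabeled during the iteration, it survives unchanged, and the product (written as $\sum_{k=2}^n$ in the statement, where the symbol abbreviates the iterated sum over the $\bar i_k$) is exactly the claimed expression.

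The main obstacle is not conceptual but organizational: it is the bookkeeping of the summation indices and the substitution of relabeled edges. Each application of Theorem~\ref{thm:inductive_step} introduces $2k-5$ new summation colors on the upper semi-perimeter of the embedded copy of $\mathcal H_{k-1}$ (Corollary~\ref{cor:linear_sums}), and these colors reappear as entries of the $6j$-symbols generated at the next iteration; one must check that this nested substitution is consistent and free of label collisions, i.e. that $\hat\Psi_{\bar i_k}$ in the final formula is well defined once the outer sums are fixed. This is the ``device a scheme for the substitution'' point raised in Section~\ref{sec:Honey}. The cleanest way to dispatch it is the observation that the relabeled edges always lie along $\mathcal O_k$ in the decomposition $\mathcal H_{k+1}=\mathcal H_k\circ_{\bar v}\mathcal O_k$, so the substitution can be carried out level by level from the outside inward, and the induction closes.
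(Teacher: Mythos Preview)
Your proposal is correct and follows essentially the same route as the paper: induction on $n$, with the base case $\mathcal H_2$ handled by direct application of Lemma~\ref{lem:bubble-move} (the paper applies it twice explicitly to extract two $6j$-symbols and the residual $\theta(c^2_1,e^2_0,b^2_0)$, which amounts to your tetrahedron-plus-one-bubble-move description), and the inductive step carried by Theorem~\ref{thm:inductive_step} with relabeling of the upper-perimeter colors. Your additional paragraph on the bookkeeping of the nested summation indices is more explicit than what the paper records, but it is commentary on the same mechanism rather than a different argument.
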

\begin{proof}
We proceed by induction over $n$. For $n = 2$ we evaluate the spin-network $\mathcal H_2$ directly. Apply Lemma~\ref{lem:bubble-move} to the top of the spin-network, where we indicate the top spin color by $t$. The other that take part in the application of the lemma are, following the previously described conventions, $c^2_1, b^2_0, c^2_, b^2_2$ and $e^2_0$ which take the places of $a, d, c, b$ and $f$, respectively, in the lemma. Then we obtain
$$
\mathcal H_2 = \Delta_{e^2_0} \begin{Bmatrix}
        c^2_1 & b^2_1 & e^2_0\\
        c^2_0 & b^2_0 & t 
        \end{Bmatrix}
        \begin{tikzpicture}[scale=0.7,baseline={([yshift=-0.1cm]current bounding box.center)},vertex/.style={anchor=base,
        circle,fill=black!25,minimum size=18pt,inner sep=1pt}]
        \draw (0,0) -- (0,2);
        \draw (0,2) ..controls (1.5,0).. (1,-1);
        \draw (0,2) ..controls (-1.5,0).. (-1,-1);
        \draw (0,0) -- (-1,-1);
        \draw (0,0) -- (1,-1);
        \draw (-1,-1) ..controls (0,-2).. (1,-1);
        \node (a) at (-1.75,0) {$c^2_1$};
        \node (a) at (1.75,0) {$b^2_2$};
        \node (a) at (-0.5,-0.25) {$a^2_0$};
        \node (a) at (0.5,-0.25) {$d^2_0$};
        \node (a) at (0.25,0.75) {$e^2_0$};
        \node (a) at (0,-2) {$g$};
        \end{tikzpicture}.
$$
A second application of Lemma~\ref{lem:bubble-move}, this time with $g$ playing the role of $e$ in the diagram of the lemma, we find that 
$$
\mathcal H_2 = \Delta_{e^2_0} \begin{Bmatrix}
        c^2_1 & b^2_1 & e^2_0\\
        c^2_0 & b^2_0 & t 
        \end{Bmatrix}
        \begin{Bmatrix}
        b^2_1 & c^2_1 & e^2_0\\
        d^2_0 & a^2_0 & g 
        \end{Bmatrix}
        \theta(c^2_1,e^2_0,b^2_0),
$$
which concludes the proof of the base of induction. To derive the general formula, now we apply Theorem~\ref{thm:inductive_step} to reduce the case of dimension $n+1$ to $n$, where $n=2$ reduces to a $\theta$-net as just shown above. With the stratified labelings introduced above, to pass from $\mathcal H_{n+1}$ to $\mathcal H_n$ we need to sum over all the $\bar i$. Once we have reduced the size of $\mathcal H_n$ by one degree, we apply again Theorem~\ref{thm:inductive_step} until we reach the $n=2$ case. Each time, we relabel all the spin-colors by $\bar a_k, \bar b_k, \bar c_k, \bar d_k, \bar e_k$ to reapply all the formulas. This completes the proof. 
\end{proof}

\begin{lemma}\label{lem:inner_product}
Let $\mathcal H_1$ and $\mathcal H_2$ be honeycomb spin-networks, and let $\mathcal P$ be as above. Then, the physical inner product is given by 
$$
\langle \mathcal H_2 | \mathcal H_1\rangle_{\rm Phys} = \langle \mathcal H_1\rangle \langle \mathcal H_2\rangle,
$$
where $\langle \mathcal H_j\rangle$ indicates the evaluation computed in Section~\ref{sec:Honey}.
\end{lemma}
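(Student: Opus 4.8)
The plan is to unfold the definition $\langle \mathcal H_2 | \mathcal H_1\rangle_{\rm Phys} = \langle \mathcal H_2 | \mathcal P | \mathcal H_1\rangle$ and to work with the graphical presentation of the Perez--Noui projector $\mathcal P$ as an insertion of Haar integration boxes --- the ``black boxes'' of \cite{NP} --- associated to the hexagonal faces of the honeycomb on which the two states, taken of common size $n$, are supported; expanding each box produces the spin-foam sum defining $\mathcal P$. First I would record the two structural identities that are available after this expansion: the \emph{gauge fixing identity} of the appendix of \cite{NP}, which allows the Haar box on any edge of a chosen maximal tree of the relevant graph to be replaced by the identity, and the \emph{summation identity}, which disposes of the remaining Haar boxes (those closing independent loops) by the standard bubble and Schur reductions of recoupling theory, at worst producing internal sums that collapse immediately by Lemma~7 of \cite{KL}. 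As already noted, a direct verification using the definitions of \cite{CosmK} shows that both identities hold verbatim in the $SU_q(2)$ case as well, so the argument is uniform in $q$.

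Second, I would apply these identities in the order dictated by Figure~\ref{fig:eliminate_bulk}. Choosing a spanning tree that runs along the diagonal Haar boxes (the blue dashed path) eliminates those boxes, and the single remaining horizontal box (the red dotted path) is then removed by one further gauge fixing along that path. The crucial point to establish is that, once all the bulk Haar boxes have been removed this way, the support of the amplitude becomes \emph{disconnected}, splitting into exactly two components --- one carrying all of the combinatorial and coloring data of $\mathcal H_1$, the other all of that of $\mathcal H_2$ --- with no residual edge or Haar box bridging the two. This decoupling is the content of Figure~\ref{fig:eliminate_bulk} and is where the substance of the proof lies.

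Third, I would identify each of the two resulting diagrams with the closed recoupling evaluation of the corresponding honeycomb spin-network computed in Section~\ref{sec:Honey}: on the ket side one obtains $\langle \mathcal H_1\rangle$ directly, and on the bra side, since the Ashtekar--Lewandowski pairing conjugates the first argument, one obtains $\overline{\langle \mathcal H_2\rangle}$, which coincides with $\langle \mathcal H_2\rangle$ when the recoupling coefficients are real (e.g.\ in the classical $q=-1$ case), explaining the form stated in the lemma. Multiplying the two now-independent evaluations gives $\langle \mathcal H_2 | \mathcal H_1\rangle_{\rm Phys} = \overline{\langle \mathcal H_2\rangle}\,\langle \mathcal H_1\rangle$. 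A final bookkeeping check verifies that the $\Delta$ and $\theta$ factors generated when the bubbles are burst in the summation identity cancel against the normalizations $\sqrt{\theta}$ attached to the trivalent vertices, exactly as in the proof of Lemma~\ref{lem:bubble-move} (cf.\ Appendix~A of \cite{Rovelli}), so that no spurious prefactor survives.

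The main obstacle I anticipate is precisely the disconnection step: one must check carefully that the prescribed sequence of gauge fixings and loop eliminations exhausts every bulk Haar box and leaves no bridge between the two boundary networks, and that what remains on each side is genuinely the closed diagram whose evaluation is the number $\langle \mathcal H_j\rangle$, rather than an open-ended fragment carrying leftover structure. The clean way to make this rigorous is an induction on the size $n$, peeling off one concentric layer of Haar boxes at a time in a manner mirroring the inductive decomposition $\mathcal H_{n+1} = \mathcal H_n \circ_{\bar v} \mathcal O_n$ of Lemma~\ref{lem:composition}, with the smallest honeycomb treated directly by hand as the base case.
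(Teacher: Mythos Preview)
Your proposal is correct and follows essentially the same route as the paper's proof: eliminate the bulk Haar boxes by a combination of the gauge fixing and summation identities of \cite{NP} (in the pattern indicated by Figure~\ref{fig:eliminate_bulk}), then remove the perimeter boxes via the summation identity to decouple the two spin-networks, and organize the general case as an induction on $n$ through the decomposition $\mathcal H_{n+1} = \mathcal H_n \circ_{\bar v} \mathcal O_n$ of Lemma~\ref{lem:composition}. The paper additionally flags that for $n\geq 3$ not every horizontal bulk box can be removed by gauge fixing alone---one per row survives and must be cleared by the summation identity after the diagonals are gone---and records a small technicality about the top edge of $\mathcal H_n$ being split by the attaching vertex of $\mathcal O_n$ in the inductive step; both points fit naturally into your outlined induction.
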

\begin{proof}
We apply the gauge fixing identity and the summation identity (see \cite{NP,GOP,Oeckl}) repeatedly to eliminate all the Haar integration boxes in the bulk. The case on $2\times 2$ honeycomb $\mathcal H_2$ spin-networks is shown in Figure~\ref{fig:eliminate_bulk}. In this case one proceeds as follows. First, making use of the integration boxes in the perimeter it is possible to eliminate the diagonal Haar integration boxes, as it is explicitly done for the top-left diagonal box via the blue dashed line in Figure~\ref{fig:eliminate_bulk}. Then, we can draw a circle that intersects the spin-networks only horizontally through the central integration boxex, shown in Figure~\ref{fig:eliminate_bulk} as a dotted red line. This allows us to eliminate the central box. Now, only the perimeter boxes are left, and they have a summation over the projector lines where no other integration box appear. We can therefore apply the summation identity to eliminate them, and therefore decouple the spin-netoworks completing the $2\times 2$ case. The figure shows a transition between hexagonal spin-networks where the initial and final states are superposed. The effect of the projector is that of adding lines for colors $k$ compatible with the spin colors of the states, and Haar integration (black) boxes on the edges. 

We observe that from the case $n\times n$ with $n=3$ one complication is easily seen to arise. In fact, it is not possible to directly eliminate all the boxes in the bulk by only utilizing the gauge fixing identity. It is inf fact possible to eliminate only one box per horizontal row (which in the $2\times 2$ case happens to eliminate the only horizontal box). However, since the diagonal rows are eliminated via gauge fixing, the horizontal rows can be cleared by an application of the summation identity. The perimeter is likewise cleared of any Haar integration boxes.
 
Although the previous argument provides a relatively detailed argument, we present here the general by induction using the decomposition $\mathcal H_{n+1} = \mathcal H_n\circ \mathcal O_n$ from Lemma~\ref{lem:composition} for the sake of completeness. In addition, this approach is practically useful for the implementation of the algorithm, which takes advantage of the hierarchic structure of the honeycomb spin-networks.

In practice, we use the inductive step to remove the integration boxes from the bulk of $\mathcal H_n$, and then use the gauge fixing identity between the integration boxes of $\mathcal O_n$ and those boxes in the perimeter of $\mathcal H_n$ that are in the bulk of $\mathcal H_{n+1}$. Observe that when decomposing $\mathcal H_{n+1}$, the top edge of $\mathcal H_n$ is split in two by a vertex connected with $\mathcal O_n$, so the induction is not immediately applicable. However, this is not a problem as the two integration boxes that arise on the two sides of the top vertex abut in an external cell, so that any line that is drawn through them can go out of $\mathcal O_n$ without intersecting the spin-network at a point other than a Haar integration box. So, the inductive procedure can be applied with the slight modification of using the gauge fixing identity to delete the diagonal integration boxes with two top integration boxes rather than a single one. The reader can convince themselves directly of the veracity of this assertion by drawing the connecting part of $\mathcal H_n$ and $\mathcal O_n$. Now, we observe that the only horizontal integration box that is left (on the central vertical leg of $\mathcal O_n$) can be removed by another application of the gauge fixing identity. The two aforementioned diagonal boxes on top of $\mathcal H_n$ cannot be eliminated directly both, but just one of them via gauge fixing. However, the remaining one, which is now the only non perimeter box that is left is eliminated via the summation identity, since no other box appear in the top cell of $\mathcal H_n$. The remaining perimeter boxes are eliminated once again via summation identity completing the inductive step. 
  
  Now, using the definition of the Perez-Noui projector $\mathcal P$ by means of the Ashtekar-Lewandowski measure we evaluate the spin networks in the identity element of $SU(2)$ in the classical case, while we apply them on an element $H^{-1}$ in the quantum case, where $H^{-1}$ reproduces the quantum recoupling theory \cite{CosmK}. This gives us the evaluation of $\mathcal H_j$, $j=1,2$, from Section~\ref{sec:Honey} as stated.  
\end{proof}

\bibliographystyle{apsrev4-2}
\bibliography{biblio}

%apsrev4-2.bst 2019-01-14 (MD) hand-edited version of apsrev4-1.bst
%Control: key (0)
%Control: author (72) initials jnrlst
%Control: editor formatted (1) identically to author
%Control: production of article title (-1) disabled
%Control: page (0) single
%Control: year (1) truncated
%Control: production of eprint (0) enabled
\begin{thebibliography}{37}%
\makeatletter
\providecommand \@ifxundefined [1]{%
 \@ifx{#1\undefined}
}%
\providecommand \@ifnum [1]{%
 \ifnum #1\expandafter \@firstoftwo
 \else \expandafter \@secondoftwo
 \fi
}%
\providecommand \@ifx [1]{%
 \ifx #1\expandafter \@firstoftwo
 \else \expandafter \@secondoftwo
 \fi
}%
\providecommand \natexlab [1]{#1}%
\providecommand \enquote  [1]{``#1''}%
\providecommand \bibnamefont  [1]{#1}%
\providecommand \bibfnamefont [1]{#1}%
\providecommand \citenamefont [1]{#1}%
\providecommand \href@noop [0]{\@secondoftwo}%
\providecommand \href [0]{\begingroup \@sanitize@url \@href}%
\providecommand \@href[1]{\@@startlink{#1}\@@href}%
\providecommand \@@href[1]{\endgroup#1\@@endlink}%
\providecommand \@sanitize@url [0]{\catcode `\\12\catcode `\$12\catcode
  `\&12\catcode `\#12\catcode `\^12\catcode `\_12\catcode `\%12\relax}%
\providecommand \@@startlink[1]{}%
\providecommand \@@endlink[0]{}%
\providecommand \url  [0]{\begingroup\@sanitize@url \@url }%
\providecommand \@url [1]{\endgroup\@href {#1}{\urlprefix }}%
\providecommand \urlprefix  [0]{URL }%
\providecommand \Eprint [0]{\href }%
\providecommand \doibase [0]{https://doi.org/}%
\providecommand \selectlanguage [0]{\@gobble}%
\providecommand \bibinfo  [0]{\@secondoftwo}%
\providecommand \bibfield  [0]{\@secondoftwo}%
\providecommand \translation [1]{[#1]}%
\providecommand \BibitemOpen [0]{}%
\providecommand \bibitemStop [0]{}%
\providecommand \bibitemNoStop [0]{.\EOS\space}%
\providecommand \EOS [0]{\spacefactor3000\relax}%
\providecommand \BibitemShut  [1]{\csname bibitem#1\endcsname}%
\let\auto@bib@innerbib\@empty
%</preamble>
\bibitem [{\citenamefont {Biamonte}\ \emph {et~al.}(2017)\citenamefont
  {Biamonte}, \citenamefont {Wittek}, \citenamefont {Pancotti}, \citenamefont
  {Rebentrost}, \citenamefont {Wiebe},\ and\ \citenamefont
  {Lloyd}}]{biamonte2017quantum}%
  \BibitemOpen
  \bibfield  {author} {\bibinfo {author} {\bibfnamefont {J.}~\bibnamefont
  {Biamonte}}, \bibinfo {author} {\bibfnamefont {P.}~\bibnamefont {Wittek}},
  \bibinfo {author} {\bibfnamefont {N.}~\bibnamefont {Pancotti}}, \bibinfo
  {author} {\bibfnamefont {P.}~\bibnamefont {Rebentrost}}, \bibinfo {author}
  {\bibfnamefont {N.}~\bibnamefont {Wiebe}},\ and\ \bibinfo {author}
  {\bibfnamefont {S.}~\bibnamefont {Lloyd}},\ }\href@noop {} {\bibfield
  {journal} {\bibinfo  {journal} {Nature}\ }\textbf {\bibinfo {volume} {549}},\
  \bibinfo {pages} {195} (\bibinfo {year} {2017})}\BibitemShut {NoStop}%
\bibitem [{\citenamefont {Farhi}\ and\ \citenamefont
  {Neven}(2018)}]{farhi2018classification}%
  \BibitemOpen
  \bibfield  {author} {\bibinfo {author} {\bibfnamefont {E.}~\bibnamefont
  {Farhi}}\ and\ \bibinfo {author} {\bibfnamefont {H.}~\bibnamefont {Neven}},\
  }\href@noop {} {\bibinfo {title} {Classification with quantum neural networks
  on near term processors}} (\bibinfo {year} {2018}),\ \Eprint
  {https://arxiv.org/abs/1802.06002} {arXiv:1802.06002 [quant-ph]} \BibitemShut
  {NoStop}%
\bibitem [{\citenamefont {Beer}\ \emph {et~al.}(2020)\citenamefont {Beer},
  \citenamefont {Bondarenko}, \citenamefont {Farrelly}, \citenamefont
  {Osborne}, \citenamefont {Salzmann}, \citenamefont {Scheiermann},\ and\
  \citenamefont {Wolf}}]{Beer_2020}%
  \BibitemOpen
  \bibfield  {author} {\bibinfo {author} {\bibfnamefont {K.}~\bibnamefont
  {Beer}}, \bibinfo {author} {\bibfnamefont {D.}~\bibnamefont {Bondarenko}},
  \bibinfo {author} {\bibfnamefont {T.}~\bibnamefont {Farrelly}}, \bibinfo
  {author} {\bibfnamefont {T.~J.}\ \bibnamefont {Osborne}}, \bibinfo {author}
  {\bibfnamefont {R.}~\bibnamefont {Salzmann}}, \bibinfo {author}
  {\bibfnamefont {D.}~\bibnamefont {Scheiermann}},\ and\ \bibinfo {author}
  {\bibfnamefont {R.}~\bibnamefont {Wolf}},\ }\href@noop {} {\bibfield
  {journal} {\bibinfo  {journal} {Nature communications}\ }\textbf {\bibinfo
  {volume} {11}},\ \bibinfo {pages} {808} (\bibinfo {year} {2020})}\BibitemShut
  {NoStop}%
\bibitem [{\citenamefont {Marcian{\`o}}\ \emph {et~al.}(2022)\citenamefont
  {Marcian{\`o}}, \citenamefont {Chen}, \citenamefont {Fabrocini},
  \citenamefont {Fields}, \citenamefont {Greco}, \citenamefont {Gresnigt},
  \citenamefont {Jinklub}, \citenamefont {Lulli}, \citenamefont {Terzidis},\
  and\ \citenamefont {Zappala}}]{TQNN}%
  \BibitemOpen
  \bibfield  {author} {\bibinfo {author} {\bibfnamefont {A.}~\bibnamefont
  {Marcian{\`o}}}, \bibinfo {author} {\bibfnamefont {D.}~\bibnamefont {Chen}},
  \bibinfo {author} {\bibfnamefont {F.}~\bibnamefont {Fabrocini}}, \bibinfo
  {author} {\bibfnamefont {C.}~\bibnamefont {Fields}}, \bibinfo {author}
  {\bibfnamefont {E.}~\bibnamefont {Greco}}, \bibinfo {author} {\bibfnamefont
  {N.}~\bibnamefont {Gresnigt}}, \bibinfo {author} {\bibfnamefont
  {K.}~\bibnamefont {Jinklub}}, \bibinfo {author} {\bibfnamefont
  {M.}~\bibnamefont {Lulli}}, \bibinfo {author} {\bibfnamefont
  {K.}~\bibnamefont {Terzidis}},\ and\ \bibinfo {author} {\bibfnamefont
  {E.}~\bibnamefont {Zappala}},\ }\href@noop {} {\bibfield  {journal} {\bibinfo
   {journal} {Neural Networks}\ }\textbf {\bibinfo {volume} {153}} (\bibinfo
  {year} {2022})}\BibitemShut {NoStop}%
\bibitem [{\citenamefont {Halverson}\ \emph {et~al.}(2021)\citenamefont
  {Halverson}, \citenamefont {Maiti},\ and\ \citenamefont
  {Stoner}}]{Halverson_2021}%
  \BibitemOpen
  \bibfield  {author} {\bibinfo {author} {\bibfnamefont {J.}~\bibnamefont
  {Halverson}}, \bibinfo {author} {\bibfnamefont {A.}~\bibnamefont {Maiti}},\
  and\ \bibinfo {author} {\bibfnamefont {K.}~\bibnamefont {Stoner}},\ }\href
  {https://doi.org/10.1088/2632-2153/abeca3} {\bibfield  {journal} {\bibinfo
  {journal} {Machine Learning: Science and Technology}\ }\textbf {\bibinfo
  {volume} {2}},\ \bibinfo {pages} {035002} (\bibinfo {year}
  {2021})}\BibitemShut {NoStop}%
\bibitem [{\citenamefont {Grosvenor}\ and\ \citenamefont
  {Jefferson}(2022)}]{grosvenor2022edge}%
  \BibitemOpen
  \bibfield  {author} {\bibinfo {author} {\bibfnamefont {K.}~\bibnamefont
  {Grosvenor}}\ and\ \bibinfo {author} {\bibfnamefont {R.}~\bibnamefont
  {Jefferson}},\ }\href@noop {} {\bibfield  {journal} {\bibinfo  {journal}
  {SciPost Physics}\ }\textbf {\bibinfo {volume} {12}},\ \bibinfo {pages} {081}
  (\bibinfo {year} {2022})}\BibitemShut {NoStop}%
\bibitem [{\citenamefont {Fields}\ \emph {et~al.}(2022)\citenamefont {Fields},
  \citenamefont {Glazebrook},\ and\ \citenamefont
  {Marcian\`o}}]{Fields:2022bti}%
  \BibitemOpen
  \bibfield  {author} {\bibinfo {author} {\bibfnamefont {C.}~\bibnamefont
  {Fields}}, \bibinfo {author} {\bibfnamefont {J.~F.}\ \bibnamefont
  {Glazebrook}},\ and\ \bibinfo {author} {\bibfnamefont {A.}~\bibnamefont
  {Marcian\`o}},\ }\href@noop {} {\bibfield  {journal} {\bibinfo  {journal}
  {Fortshritte der Physik}\ }\textbf {\bibinfo {volume} {70}} (\bibinfo {year}
  {2022})},\ \Eprint {https://arxiv.org/abs/2205.13184} {arXiv:2205.13184
  [quant-ph]} \BibitemShut {NoStop}%
\bibitem [{\citenamefont {Marciano}\ \emph {et~al.}(2022)\citenamefont
  {Marciano}, \citenamefont {Chen}, \citenamefont {Fabrocini}, \citenamefont
  {Fields}, \citenamefont {Lulli},\ and\ \citenamefont
  {Zappala}}]{marciano2022deep}%
  \BibitemOpen
  \bibfield  {author} {\bibinfo {author} {\bibfnamefont {A.}~\bibnamefont
  {Marciano}}, \bibinfo {author} {\bibfnamefont {D.}~\bibnamefont {Chen}},
  \bibinfo {author} {\bibfnamefont {F.}~\bibnamefont {Fabrocini}}, \bibinfo
  {author} {\bibfnamefont {C.}~\bibnamefont {Fields}}, \bibinfo {author}
  {\bibfnamefont {M.}~\bibnamefont {Lulli}},\ and\ \bibinfo {author}
  {\bibfnamefont {E.}~\bibnamefont {Zappala}},\ }\href@noop {} {\bibfield
  {journal} {\bibinfo  {journal} {arXiv preprint arXiv:2210.13741}\ } (\bibinfo
  {year} {2022})}\BibitemShut {NoStop}%
\bibitem [{\citenamefont {Reshetikhin}\ and\ \citenamefont
  {Turaev}(1991)}]{reshetikhin1991invariants}%
  \BibitemOpen
  \bibfield  {author} {\bibinfo {author} {\bibfnamefont {N.}~\bibnamefont
  {Reshetikhin}}\ and\ \bibinfo {author} {\bibfnamefont {V.}~\bibnamefont
  {Turaev}},\ }\href@noop {} {\bibfield  {journal} {\bibinfo  {journal}
  {Inventiones mathematicae}\ }\textbf {\bibinfo {volume} {103}},\ \bibinfo
  {pages} {547} (\bibinfo {year} {1991})}\BibitemShut {NoStop}%
\bibitem [{\citenamefont {Freedman}\ \emph
  {et~al.}(2002{\natexlab{a}})\citenamefont {Freedman}, \citenamefont
  {Larsen},\ and\ \citenamefont {Wang}}]{key1911253m}%
  \BibitemOpen
  \bibfield  {author} {\bibinfo {author} {\bibfnamefont {M.~H.}\ \bibnamefont
  {Freedman}}, \bibinfo {author} {\bibfnamefont {M.~J.}\ \bibnamefont
  {Larsen}},\ and\ \bibinfo {author} {\bibfnamefont {Z.}~\bibnamefont {Wang}},\
  }\href {https://doi.org/10.1007/s002200200636} {\bibfield  {journal}
  {\bibinfo  {journal} {Comm. Math. Phys.}\ }\textbf {\bibinfo {volume}
  {228}},\ \bibinfo {pages} {177} (\bibinfo {year} {2002}{\natexlab{a}})},\
  \bibinfo {note} {mR:1911253. Zbl:1045.20027.}\BibitemShut {Stop}%
\bibitem [{\citenamefont {Freedman}\ \emph
  {et~al.}(2002{\natexlab{b}})\citenamefont {Freedman}, \citenamefont
  {Kitaev},\ and\ \citenamefont {Wang}}]{key1910832m}%
  \BibitemOpen
  \bibfield  {author} {\bibinfo {author} {\bibfnamefont {M.~H.}\ \bibnamefont
  {Freedman}}, \bibinfo {author} {\bibfnamefont {A.}~\bibnamefont {Kitaev}},\
  and\ \bibinfo {author} {\bibfnamefont {Z.}~\bibnamefont {Wang}},\ }\href
  {https://doi.org/10.1007/s002200200635} {\bibfield  {journal} {\bibinfo
  {journal} {Comm. Math. Phys.}\ }\textbf {\bibinfo {volume} {227}},\ \bibinfo
  {pages} {587} (\bibinfo {year} {2002}{\natexlab{b}})},\ \bibinfo {note}
  {arXiv:quant-ph/0001071. MR:1910832. Zbl:1014.81006.}\BibitemShut {Stop}%
\bibitem [{\citenamefont {Turaev}\ and\ \citenamefont
  {Viro}(1992)}]{turaev1992state}%
  \BibitemOpen
  \bibfield  {author} {\bibinfo {author} {\bibfnamefont {V.}~\bibnamefont
  {Turaev}}\ and\ \bibinfo {author} {\bibfnamefont {O.}~\bibnamefont {Viro}},\
  }\href@noop {} {\bibfield  {journal} {\bibinfo  {journal} {Topology}\
  }\textbf {\bibinfo {volume} {31}},\ \bibinfo {pages} {865} (\bibinfo {year}
  {1992})}\BibitemShut {NoStop}%
\bibitem [{\citenamefont {{Levin}}\ and\ \citenamefont
  {{Wen}}(2003)}]{2003PhRvB..67x5316L}%
  \BibitemOpen
  \bibfield  {author} {\bibinfo {author} {\bibfnamefont {M.}~\bibnamefont
  {{Levin}}}\ and\ \bibinfo {author} {\bibfnamefont {X.-G.}\ \bibnamefont
  {{Wen}}},\ }\href {https://doi.org/10.1103/PhysRevB.67.245316} {\bibfield
  {journal} {\bibinfo  {journal} {\prb}\ }\textbf {\bibinfo {volume} {67}},\
  \bibinfo {eid} {245316} (\bibinfo {year} {2003})},\ \Eprint
  {https://arxiv.org/abs/cond-mat/0302460} {arXiv:cond-mat/0302460
  [cond-mat.str-el]} \BibitemShut {NoStop}%
\bibitem [{\citenamefont {Levin}\ and\ \citenamefont
  {Wen}(2005)}]{Levin:2004mi}%
  \BibitemOpen
  \bibfield  {author} {\bibinfo {author} {\bibfnamefont {M.~A.}\ \bibnamefont
  {Levin}}\ and\ \bibinfo {author} {\bibfnamefont {X.-G.}\ \bibnamefont
  {Wen}},\ }\href {https://doi.org/10.1103/PhysRevB.71.045110} {\bibfield
  {journal} {\bibinfo  {journal} {Phys. Rev. B}\ }\textbf {\bibinfo {volume}
  {71}},\ \bibinfo {pages} {045110} (\bibinfo {year} {2005})},\ \Eprint
  {https://arxiv.org/abs/cond-mat/0404617} {arXiv:cond-mat/0404617}
  \BibitemShut {NoStop}%
\bibitem [{\citenamefont {Roberts}(1995)}]{roberts1995skein}%
  \BibitemOpen
  \bibfield  {author} {\bibinfo {author} {\bibfnamefont {J.}~\bibnamefont
  {Roberts}},\ }\href@noop {} {\bibfield  {journal} {\bibinfo  {journal}
  {Topology}\ }\textbf {\bibinfo {volume} {34}},\ \bibinfo {pages} {771}
  (\bibinfo {year} {1995})}\BibitemShut {NoStop}%
\bibitem [{\citenamefont {Benedetti}\ and\ \citenamefont
  {Petronio}(1996)}]{benedetti1996roberts}%
  \BibitemOpen
  \bibfield  {author} {\bibinfo {author} {\bibfnamefont {R.}~\bibnamefont
  {Benedetti}}\ and\ \bibinfo {author} {\bibfnamefont {C.}~\bibnamefont
  {Petronio}},\ }\href@noop {} {\bibfield  {journal} {\bibinfo  {journal}
  {Journal of Knot Theory and its Ramifications}\ }\textbf {\bibinfo {volume}
  {5}},\ \bibinfo {pages} {427} (\bibinfo {year} {1996})}\BibitemShut {NoStop}%
\bibitem [{\citenamefont {Lickorish}(1991)}]{lickorish1991three}%
  \BibitemOpen
  \bibfield  {author} {\bibinfo {author} {\bibfnamefont {W.}~\bibnamefont
  {Lickorish}},\ }\href@noop {} {\bibfield  {journal} {\bibinfo  {journal}
  {Mathematische Annalen}\ }\textbf {\bibinfo {volume} {290}},\ \bibinfo
  {pages} {657} (\bibinfo {year} {1991})}\BibitemShut {NoStop}%
\bibitem [{\citenamefont {Lickorish}(1993)}]{lickorish1993skeins}%
  \BibitemOpen
  \bibfield  {author} {\bibinfo {author} {\bibfnamefont {W.~B.}\ \bibnamefont
  {Lickorish}},\ }\href@noop {} {\bibfield  {journal} {\bibinfo  {journal}
  {Pacific journal of mathematics}\ }\textbf {\bibinfo {volume} {159}},\
  \bibinfo {pages} {337} (\bibinfo {year} {1993})}\BibitemShut {NoStop}%
\bibitem [{\citenamefont {Lickorish}(1992)}]{lickorish1992calculations}%
  \BibitemOpen
  \bibfield  {author} {\bibinfo {author} {\bibfnamefont {W.}~\bibnamefont
  {Lickorish}},\ }\href@noop {} {\bibfield  {journal} {\bibinfo  {journal}
  {Commentarii Mathematici Helvetici}\ }\textbf {\bibinfo {volume} {67}},\
  \bibinfo {pages} {571} (\bibinfo {year} {1992})}\BibitemShut {NoStop}%
\bibitem [{\citenamefont {Kirillov~Jr}(2011)}]{kirillov2011string}%
  \BibitemOpen
  \bibfield  {author} {\bibinfo {author} {\bibfnamefont {A.}~\bibnamefont
  {Kirillov~Jr}},\ }\href@noop {} {\bibfield  {journal} {\bibinfo  {journal}
  {arXiv preprint arXiv:1106.6033}\ } (\bibinfo {year} {2011})}\BibitemShut
  {NoStop}%
\bibitem [{\citenamefont {Runkel}(2020)}]{runkel2020string}%
  \BibitemOpen
  \bibfield  {author} {\bibinfo {author} {\bibfnamefont {I.}~\bibnamefont
  {Runkel}},\ }\href@noop {} {\bibfield  {journal} {\bibinfo  {journal}
  {Journal of Knot Theory and Its Ramifications}\ }\textbf {\bibinfo {volume}
  {29}},\ \bibinfo {pages} {2050035} (\bibinfo {year} {2020})}\BibitemShut
  {NoStop}%
\bibitem [{\citenamefont {K{\'a}d{\'a}r}\ \emph {et~al.}()\citenamefont
  {K{\'a}d{\'a}r}, \citenamefont {Marzuoli}, \citenamefont {Rasetti} \emph
  {et~al.}}]{kadar2010microscopic}%
  \BibitemOpen
  \bibfield  {author} {\bibinfo {author} {\bibfnamefont {Z.}~\bibnamefont
  {K{\'a}d{\'a}r}}, \bibinfo {author} {\bibfnamefont {A.}~\bibnamefont
  {Marzuoli}}, \bibinfo {author} {\bibfnamefont {M.}~\bibnamefont {Rasetti}},
  \emph {et~al.},\ }\href@noop {} {\bibfield  {journal} {\bibinfo  {journal}
  {Advances in Mathematical Physics}\ }\textbf {\bibinfo {volume}
  {2010}}}\BibitemShut {NoStop}%
\bibitem [{\citenamefont {Koenig}\ \emph {et~al.}(2010)\citenamefont {Koenig},
  \citenamefont {Kuperberg},\ and\ \citenamefont
  {Reichardt}}]{koenig2010quantum}%
  \BibitemOpen
  \bibfield  {author} {\bibinfo {author} {\bibfnamefont {R.}~\bibnamefont
  {Koenig}}, \bibinfo {author} {\bibfnamefont {G.}~\bibnamefont {Kuperberg}},\
  and\ \bibinfo {author} {\bibfnamefont {B.~W.}\ \bibnamefont {Reichardt}},\
  }\href@noop {} {\bibfield  {journal} {\bibinfo  {journal} {Annals of
  Physics}\ }\textbf {\bibinfo {volume} {325}},\ \bibinfo {pages} {2707}
  (\bibinfo {year} {2010})}\BibitemShut {NoStop}%
\bibitem [{\citenamefont {Kauffman}\ and\ \citenamefont {Lins}(2016)}]{KL}%
  \BibitemOpen
  \bibfield  {author} {\bibinfo {author} {\bibfnamefont {L.~H.}\ \bibnamefont
  {Kauffman}}\ and\ \bibinfo {author} {\bibfnamefont {S.}~\bibnamefont
  {Lins}},\ }\href@noop {} {\emph {\bibinfo {title} {Temperley-Lieb Recoupling
  Theory and Invariants of 3-Manifolds (AM-134), Volume 134}}}\ (\bibinfo
  {publisher} {Princeton University Press},\ \bibinfo {year}
  {2016})\BibitemShut {NoStop}%
\bibitem [{\citenamefont {Lickorish}(2012)}]{Lick}%
  \BibitemOpen
  \bibfield  {author} {\bibinfo {author} {\bibfnamefont {W.~R.}\ \bibnamefont
  {Lickorish}},\ }\href@noop {} {\emph {\bibinfo {title} {An introduction to
  knot theory}}},\ Vol.\ \bibinfo {volume} {175}\ (\bibinfo  {publisher}
  {Springer Science \& Business Media},\ \bibinfo {year} {2012})\BibitemShut
  {NoStop}%
\bibitem [{\citenamefont {Piergallini}(1986)}]{piergallini1986standard}%
  \BibitemOpen
  \bibfield  {author} {\bibinfo {author} {\bibfnamefont {R.}~\bibnamefont
  {Piergallini}},\ }\href@noop {} {\bibfield  {journal} {\bibinfo  {journal}
  {Mat. Palermo}\ ,\ \bibinfo {pages} {391}} (\bibinfo {year}
  {1986})}\BibitemShut {NoStop}%
\bibitem [{\citenamefont {Matveev}(1988)}]{matveev1988transformations}%
  \BibitemOpen
  \bibfield  {author} {\bibinfo {author} {\bibfnamefont {S.}~\bibnamefont
  {Matveev}},\ }\href@noop {} {\bibfield  {journal} {\bibinfo  {journal}
  {Izvestiya: Mathematics}\ }\textbf {\bibinfo {volume} {31}},\ \bibinfo
  {pages} {423} (\bibinfo {year} {1988})}\BibitemShut {NoStop}%
\bibitem [{\citenamefont {Noui}\ and\ \citenamefont {Perez}(2005)}]{NP}%
  \BibitemOpen
  \bibfield  {author} {\bibinfo {author} {\bibfnamefont {K.}~\bibnamefont
  {Noui}}\ and\ \bibinfo {author} {\bibfnamefont {A.}~\bibnamefont {Perez}},\
  }\href@noop {} {\bibfield  {journal} {\bibinfo  {journal} {Classical and
  Quantum Gravity}\ }\textbf {\bibinfo {volume} {22}},\ \bibinfo {pages} {1739}
  (\bibinfo {year} {2005})}\BibitemShut {NoStop}%
\bibitem [{\citenamefont {Gresnigt}\ \emph {et~al.}(2023)\citenamefont
  {Gresnigt}, \citenamefont {Marcian\`o},\ and\ \citenamefont
  {Zappala}}]{CosmK}%
  \BibitemOpen
  \bibfield  {author} {\bibinfo {author} {\bibfnamefont {N.}~\bibnamefont
  {Gresnigt}}, \bibinfo {author} {\bibfnamefont {A.}~\bibnamefont
  {Marcian\`o}},\ and\ \bibinfo {author} {\bibfnamefont {E.}~\bibnamefont
  {Zappala}},\ }\href {https://doi.org/10.1103/PhysRevD.107.046018} {\bibfield
  {journal} {\bibinfo  {journal} {Phys. Rev. D}\ }\textbf {\bibinfo {volume}
  {107}},\ \bibinfo {pages} {046018} (\bibinfo {year} {2023})}\BibitemShut
  {NoStop}%
\bibitem [{\citenamefont {Freidel}\ and\ \citenamefont
  {Hnybida}(2013)}]{SN_evaluation}%
  \BibitemOpen
  \bibfield  {author} {\bibinfo {author} {\bibfnamefont {L.}~\bibnamefont
  {Freidel}}\ and\ \bibinfo {author} {\bibfnamefont {J.}~\bibnamefont
  {Hnybida}},\ }\href@noop {} {\bibfield  {journal} {\bibinfo  {journal}
  {Journal of Mathematical Physics}\ }\textbf {\bibinfo {volume} {54}},\
  \bibinfo {pages} {112301} (\bibinfo {year} {2013})}\BibitemShut {NoStop}%
\bibitem [{\citenamefont {Bonzom}\ and\ \citenamefont {Livine}(2022)}]{BL}%
  \BibitemOpen
  \bibfield  {author} {\bibinfo {author} {\bibfnamefont {V.}~\bibnamefont
  {Bonzom}}\ and\ \bibinfo {author} {\bibfnamefont {E.~R.}\ \bibnamefont
  {Livine}},\ }\href@noop {} {\bibfield  {journal} {\bibinfo  {journal}
  {Annales de l’Institut Henri Poincar{\'e} D}\ }\textbf {\bibinfo {volume}
  {9}},\ \bibinfo {pages} {73} (\bibinfo {year} {2022})}\BibitemShut {NoStop}%
\bibitem [{\citenamefont {Hinton}\ and\ \citenamefont {Sejnowski}(1983)}]{HS}%
  \BibitemOpen
  \bibfield  {author} {\bibinfo {author} {\bibfnamefont {G.~E.}\ \bibnamefont
  {Hinton}}\ and\ \bibinfo {author} {\bibfnamefont {T.~J.}\ \bibnamefont
  {Sejnowski}},\ }in\ \href@noop {} {\emph {\bibinfo {booktitle} {Proceedings
  of the IEEE conference on Computer Vision and Pattern Recognition}}},\ Vol.\
  \bibinfo {volume} {448}\ (\bibinfo {organization} {Citeseer},\ \bibinfo
  {year} {1983})\ pp.\ \bibinfo {pages} {448--453}\BibitemShut {NoStop}%
\bibitem [{\citenamefont {Ising}(1925)}]{Ising}%
  \BibitemOpen
  \bibfield  {author} {\bibinfo {author} {\bibfnamefont {E.}~\bibnamefont
  {Ising}},\ }\href@noop {} {\bibfield  {journal} {\bibinfo  {journal}
  {Zeitschrift für Phys.}\ }\textbf {\bibinfo {volume} {31}},\ \bibinfo
  {pages} {253} (\bibinfo {year} {1925})}\BibitemShut {NoStop}%
\bibitem [{\citenamefont {Sherrington}\ and\ \citenamefont
  {Kirkpatrick}(1975)}]{SK}%
  \BibitemOpen
  \bibfield  {author} {\bibinfo {author} {\bibfnamefont {D.}~\bibnamefont
  {Sherrington}}\ and\ \bibinfo {author} {\bibfnamefont {S.}~\bibnamefont
  {Kirkpatrick}},\ }\href@noop {} {\bibfield  {journal} {\bibinfo  {journal}
  {Physical review letters}\ }\textbf {\bibinfo {volume} {35}},\ \bibinfo
  {pages} {1792} (\bibinfo {year} {1975})}\BibitemShut {NoStop}%
\bibitem [{\citenamefont {Rovelli}(2004)}]{Rovelli}%
  \BibitemOpen
  \bibfield  {author} {\bibinfo {author} {\bibfnamefont {C.}~\bibnamefont
  {Rovelli}},\ }\href@noop {} {\emph {\bibinfo {title} {Quantum gravity}}}\
  (\bibinfo  {publisher} {Cambridge university press},\ \bibinfo {year}
  {2004})\BibitemShut {NoStop}%
\bibitem [{\citenamefont {Girelli}(2002)}]{GOP}%
  \BibitemOpen
  \bibfield  {author} {\bibinfo {author} {\bibfnamefont {F.}~\bibnamefont
  {Girelli}},\ }\href@noop {} {\emph {\bibinfo {title} {Spin foam diagrammatics
  and topological invariance}}},\ Vol.~\bibinfo {volume} {19}\ (\bibinfo
  {publisher} {IOP Publishing},\ \bibinfo {year} {2002})\ p.\ \bibinfo {pages}
  {1093}\BibitemShut {NoStop}%
\bibitem [{\citenamefont {Oeckl}(2003)}]{Oeckl}%
  \BibitemOpen
  \bibfield  {author} {\bibinfo {author} {\bibfnamefont {R.}~\bibnamefont
  {Oeckl}},\ }\href@noop {} {\bibfield  {journal} {\bibinfo  {journal} {Journal
  of Geometry and Physics}\ }\textbf {\bibinfo {volume} {46}},\ \bibinfo
  {pages} {308} (\bibinfo {year} {2003})}\BibitemShut {NoStop}%
\end{thebibliography}%

\end{document}